\newcommand{\EE}{\mathbb{E} }
\newcommand{\PP}{\mathbb{P} }
\newcommand{\RR}{\mathbb{R} }
\newcommand{\be}{\begin{equation}}
\newcommand{\en}{\end{equation}}
\newtheorem{theorem}{Theorem}
\newtheorem{prop}{Proposition}
\newtheorem{lemma}{Lemma}
\newcommand{\ea}{\end{eqnarray}}
\newcommand{\ba}{\begin{eqnarray}}
\newcommand{\ean}{\end{eqnarray*}}
\newcommand{\ban}{\begin{eqnarray*}}
\begin{document}
\title{Portfolio Optimization with Delay Factor Models}
\author{Shuenn-Jyi Sheu\thanks{Department of Mathematics, National Central University, Chung-Li, Taiwan, 32001 {\em sheusj@math.ncu.edu.tw}}\and Li-Hsien Sun\thanks{Institute of Statistics, National Central University, Chung-Li, Taiwan, 32001 {\em lihsiensun@ncu.edu.tw}. Work supported by Most grant 103-2118-M008-006-MY2}\and Zheng Zhang\thanks{
Institute of Statistics and Big Data, Renmin University of China {\em zhengzhang@ruc.edu.cn}} }

\date{\today}
\maketitle

\begin{abstract}
We propose an optimal portfolio problem in the incomplete market where the underlying assets  depend on economic factors with delayed effects, such models can describe the short term forecasting and the interaction with time lag among different financial markets. The delay phenomenon can be recognized as the integral type and the pointwise type.  The optimal strategy is identified through maximizing the power utility. Due to the delay leading to the non-Markovian structure, the conventional Hamilton-Jacobi-Bellman (HJB) approach is no longer applicable. By using the stochastic maximum principle, we argue that the optimal strategy can be characterized by the solutions of a decoupled quadratic forward-backward stochastic differential equations(QFBSDEs). The optimality is verified via  the super-martingale argument. The existence and uniqueness of the solution to the QFBSDEs are established. In addition, if the market is complete, we also provide a martingale based method to solve our portfolio optimization problem, and investigate its connection with the proposed FBSDE approach. Finally, two particular cases are analyzed where the corresponding FBSDEs can be solved explicitly. 
\end{abstract}
\textbf{Keywords: Portfolio optimization; Delay factor models; FBSDEs; Martingale method.}

\section{Introduction}
Owing to globalization and transparency, financial markets cannot be treated as a local system. The interaction between markets and markets among different areas is observed. Hence, in order to catch this characteristic, we take the factor model into account where the coefficients of the underlying financial assets depend on another stochastic processes denoted as \emph{factors}. These factors can be described as the information or influence coming from external financial markets. The optimal portfolio problems based on the factor models are widely discussed. If the factor processes affect the underlying assets only through their current status values, i.e., the effect is Markovian, the dynamic programming can be applied to derive the corresponding Hamilton Jacobi Bellman (HJB) equations, and the candidates of Markov or feedback optimal strategies can be obtained through the first order condition.  The optimality of the strategy can be guaranteed by the verification theorem. See \cite{BR2004,BP2013,CH2005,DK2008,FH2003,FP2004,FS1999,FS2002,FH2017, HS2012-1,HS2012-2,Sekine2011,Liu2007,Nagai1996,Nagai2003,Nagai2014} for thorough discussion. \\

In real financial markets, the significant dependence from historical information is available according to the discussion in \cite{LMW2000,HK2005,Loren2007,Do2012}, which can be used for the short term forecasting.  The influence of factors usually does not arise immediately but has time delay feature, see \cite {Do2015}.  The stochastic control problems with delay are extremely important in applications and have been extensively studied in various settings. An optimal solution satisfying Pontryagin-Bismut-Benssoussan type stochastic maximum principle for delay in state variables is introduced by \cite{B.Oksendal2001,B.Oksendal2011}. In addition,  \cite{Larssen2002, B.Larssen2003} reduce the system with   state delay to a finite dimensional problem by using dynamics programming principle. The general stochastic control problems with both the state delay and the control delay are discussed in \cite{ Gozzi2006,F.Gozzi2009,F.Gozzi2015} for the abstract space and the corresponding HJB approach. In addition, \cite{L.Chen2011}, \cite{L.Chen2012} and \cite{Xu2013} studied  above general optimization problem with delay using the forward and advanced backward stochastic differential equations (FABSDEs) approach. In particular, the linear quadratic stochastic game with delay for finite players is studied in \cite{Carmona-Fouque2016}. \cite{Bensoussan2015} consider the linear quadratic mean field Stackelberg game with delay. \\


In this article, we propose a portfolio optimization problem in the incomplete market by taking into account the factor models with delay. In particular, the coefficients of the underlying risky assets involve both pointwise time delay factors and integral time delay factors.  The optimal strategy for the portfolio optimization problem is identified by maximizing the expected power utility.  Since the delay feature leads to the non-Markovian structure, the dynamic programming principle and HJB approaches cannot be applied any more.  We utilize the stochastic maximum principle and forward and backward differential equations (FBSDEs) to characterize the optimal solution.  A verification theorem is provided by applying the positive super-martingale property. See Theorem \ref{thm:FBSDE} and Theorem \ref{thm:super-mart}. See also \cite{HI2005}. The existence and uniqueness of the corresponding FBSDEs driven by the optimal strategy are included. Due to the complexity of the coupled FBSDEs, the explicit solutions cannot be  obtained in general. However, we propose two particular cases where the explicit solutions can be obtained, and discuss the corresponding financial implications. Furthermore,  if the market is complete, we also provide a martingale based approach to solve our optimization problem,  and investigate its connection with the proposed FBSDE approach. We make a precise statement of the relation in Theorem \ref{thm:mart-FBSDE}.  See also a related discussion in \cite{HH2014}. To our best knowledge, it is the first work to systematically study the portfolio optimization problem  with delayed factor models.\\

 





The reminder of this paper is organized as follows. In Section \ref{model}, the basic framework and notations are introduced. In Section \ref{sec:main_result},  we provide the solution for our proposed optimal portfolio problem in the incomplete market. Section \ref{sec:CMA}  illustrates the comparion of the solution between the FBSDE approach and the martingale in a complete market with delay. Two particular cases which have explicit solutions are studied in Section \ref{particular-case}. The conclusion and discussion are provided in Section \ref{conclusion}. The technical proof is given in the appendices.


\section{Delay Factor Model}\label{model}
Let  $(\Omega,{\cal{F}},({\cal{F}}_t)_{t\geq 0},{\mathbb{P}})$ be a filtered complete probability space, where a $N\in\mathbb{N}$ dimensional Wiener process $W=(W^1, \cdots, W^N)$ is  defined. Let $m$ and $n$ be two finite positive integers smaller or equal to $N$, and 
\begin{align*}
&r(\cdot,\cdot,\cdot ): \mathbb{R}^n\times \mathbb{R} \times \mathbb{R}^n\to \mathbb{R}\ , \\
&\mu(\cdot,\cdot,\cdot): \mathbb{R}^n\times \mathbb{R} \times \mathbb{R}^n\to \mathbb{R}^m\ , \\
&\sigma(\cdot,\cdot,\cdot): \mathbb{R}^n\times \mathbb{R}\times  \mathbb{R}^n\to \mathbb{R}^{m\times N}\ , \\
&b(\cdot,\cdot): \mathbb{R}\times \mathbb{R}^n\to \mathbb{R}^n\ , \\
&\sigma_F(\cdot,\cdot ): \mathbb{R}\times  \mathbb{R}^n\to \mathbb{R}^{n\times N}\ , \\
&h(\cdot):\mathbb{R}^n\to \mathbb{R}\ ,
\end{align*}	
are deterministic measurable functions satisfying the following conditions:	
\begin{itemize}
\item (A1) $r$, $\mu$, $b$, $\sigma$,  and $\sigma_F$ are globally Lipschitz and twice differentiable; furthermore, we assume their first and second order derivatives are bounded and continuous;
\item (A2) $r$ is nonnegative and bounded; 
\item (A3) $(\sigma\sigma^*)^{-1}$ is bounded, where $(\cdot)^{*}$ denote be the transpose operation;
\item (A4) $h$ is globally Lipschitz and  differentiable. 
\end{itemize}

 The price of the risk free asset $S^0 \in \mathbb{R}^+=\{ z; z> 0\}$ and the risky stocks $S:=(S^{1}, \cdots ,S^{m})^{*}\in\mathbb{R}^m$ are modelled by the following stochastic differential equations:
\begin{align*}  
\left\{ 
\begin{array}{ll}
&dS^{0}{(t)}=S^{0}_{t}r(Y(t), V(t), Z(t))dt,   \\[2mm] 
&dS^{i}(t)=S^{i}(t)\left\{\mu^{i}(Y(t), V(t), Z(t))dt+\sum _{k=1}^{N}\sigma^{ik}(Y(t), V(t), Z(t))dW^{k}(t)\right\}, \quad  \\[2mm]
&dY(t)=b( Y(t))dt+\sigma_F( Y(t))dW(t), \\[2mm]
&S_0^0=s_0^0\ , \ S^i_0=s^i_0\ ,\ i \in\{1,...,m\}\ , Y_0=y\in \RR^n
\end{array}\right.
\end{align*}
where $\mu^i(\cdot)$ is the $i^{th}$ component of $\mu(\cdot)$, $\sigma^{ik}(\cdot)$ is the $(i,k)$ element of $\sigma(\cdot)$.  The $\RR^n$-valued and $\mathcal{F}_t$-adapted process $Y(t)$ is called the \emph{factor process} which describes the information from external markets, and $V(t)$ is the exponentially weighted average delay factor  \citep{Sekine2011,Pang2011} defined by
\be \label{def:V(t)}
V(t)=\int_{-\delta}^0e^{\lambda s}h(Y(t+s))ds,
\en 
and $Z(t)$ is the pointwise delay factor 
\be
Z(t)=Y(t-\delta)\ ,
\en 
and $\lambda$, $\delta$ are fixed positive constants. The differential form of $V$ can be written as 
\[
dV(t)=\left(h(Y(t))-e^{-\lambda\delta}h(Z(t))-\lambda V(t)\right)dt,\quad V(0)=v=\int_{-\delta}^0e^{\lambda s}h(Y( s))ds.
\]
For simplicity, we shall use following notations in the rest of discussions
\begin{align*}
&\mu(t)=\mu(Y(t),V(t),Z(t))\ , \\
&r(t)=r(Y(t),V(t),Z(t))\ , \\
&\sigma(t)=\sigma(Y(t),V(t),Z(t))\ , \\
&b(t)=b( Y(t),V(t),Z(t))\ , \\
&\sigma_F(t)=\sigma_F( Y(t),V(t),Z(t))\ .
\end{align*}
The financial market consists of one bound with interest rate $r(t)$ and $m\leq N$ stocks. In the case of $m<N$, we face an incomplete market.\\

For a given portfolio $\pi=(\pi ^{1},\ldots \pi ^{m})^{*}$  and under the self-financing condition, the dynamics of the wealth process $X^{\pi}(t)$ can be written as
\be\label{wealth}
dX^{\pi}(t)=X^{\pi}(t)\left\{r(t)+\pi(t)^*(\mu(t)-r(t){\bf 1})\right\}dt+X^{\pi}(t)\pi(t)^*\sigma(t)dW(t),\quad X(0)=x,
\en
where ${\bf 1}$  denotes the $m$-dimensional column vector whose components are all of 1. Note that 
$\pi^i(t)$ is the proportion of the wealth to hold the $i$-th risky asset for  $i\geq 1$, and $\pi^0:=1-\sum_{i=1}^m\pi^i$ is the proportion of holding the risky free asset. The portfolio $\pi$ is an admissible portfolio if it is progressively measurable and
$$
\int_0^T |\pi(t)|^2 dt <\infty,\ a.s.
$$
The goal of this article is to find the optimal portfolio to maximize the expected power utility
\be\label{power-utility}
\sup_{\pi}\EE[U(X^{\pi}(T))]
\en
where $U(x)=\frac{1}{\gamma}x^{\gamma}$ with $0<\gamma<1$.

\section{Main Results}\label{sec:main_result}
Since the wealth process \eqref{wealth} involves the delay factor, the conventional HJB approach is no longer applicable. We apply the stochastic maximum principle described in \cite{bismut1978introductory} (see also \cite{peng1993backward} and \cite{HH2014}) to solve the optimization problem \eqref{power-utility}. Following the idea in \cite{HH2014}, by applying It\^o formula to $\tilde{X}^{\pi}(t)=U(X^{\pi}(t))$, we get
\ba
\nonumber d\tilde X^{\pi}(t)&=&\partial_xU(U^{-1}(\tilde X^{\pi}(t)))U^{-1}(\tilde X^{\pi}(t))\bigg(r(t)+\pi(t)^*(\mu(t)-r(t){\bf 1})\bigg)dt\\
\nonumber &&+\frac{1}{2}\partial_{xx}U(U^{-1}(\tilde X^{\pi}(t)))(U^{-1}(\tilde X^{\pi}(t)))^2\pi(t)^*\sigma(t)\sigma(t)^*\pi(t)dt\\
\nonumber&&+\partial_xU(U^{-1}(\tilde X^{\pi}(t)))U^{-1}(\tilde X^{\pi}(t))\pi(t)\sigma(t) dW(t),\\
\nonumber&=&\gamma \tilde{X}(t)    \bigg \{\left( r(t) + \pi(t)^*(\mu(t)-r(t) {\bf 1})-\frac{1}{2} (1-\gamma)\pi(t)^* \sigma(t)\sigma(t)^* \pi(t)\right) dt\\
&&+\pi(t)^* \sigma(t) dW(t)\bigg\}, \quad \tilde X^{\pi}(0)= U(x). \label{tildex-multi}
\ea
The Hamiltonian is given by
\ba
 &&H(\pi,\tilde x,p,q)=\gamma \tilde x\left\{\left(r+\pi^*(\mu-r{\bf 1}) -\frac{1}{2}(1-\gamma)\pi^*\sigma\sigma^*\pi\right)p+\pi^*\sigma q\right\} ,\label{Hamiltonian_power-multi}
\ea
which is a strictly concave function of $\pi$. Here we omit the dependence of $H$ on $y, v, z$. Recall that $r, \mu, \sigma$ are functions of $y, v, z$. Therefore, the optimal portfolio $\hat\pi$ satisfies  the first order condition:
\[
\partial_{\pi}H(\hat\pi,\tilde x,p,q)=0\ ,
\]
which leads to the candidate of the optimal strategy 
\be\label{optimal-multi}
\hat\pi(t)=\frac{1}{1-\gamma}(\sigma(t)\sigma(t)^*)^{-1}\left((\mu(t)-r(t){\bf 1})+\sigma(t) \frac{q(t)}{p(t)}\right)\ ,
\en
where $\{p(t),q(t)\}_{t\geq 0}$ is the dual processes given by
\[
dp(t)=-\partial_xH(\hat\pi(t),\tilde X^{\hat\pi}(t),p(t),q(t))dt+q(t)^*dW(t). 
\]
In \eqref{optimal-multi}, we implicitly assume $p(t)\neq 0$ for all $t$. Substituting \eqref{optimal-multi} into \eqref{tildex-multi} and using the notation 
\be\label{theta}
\theta(\cdot):=\sigma(\cdot)^*(\sigma(\cdot)\sigma^*(\cdot))^{-1}(\mu(\cdot)-r(\cdot){\bf 1}) ,
\en
we obtain the following forward backward stochastic differential equations (FBSDEs):
\ba
\nonumber d\tilde X^{\hat\pi}(t)&=&\frac{\gamma}{1-\gamma}\tilde X^{\hat\pi}(t)\bigg\{(1-\gamma)r(t)+\frac{1}{2}\theta(t)^*\theta(t)-\frac{1}{2}\frac{q(t)}{p(t)}^*\sigma(t)^*(\sigma(t)\sigma(t)^*)^{-1}\sigma(t)\frac{q(t)}{p(t)}\bigg\}dt\\
\label{SDE-multi-2}&&+\frac{\gamma}{1-\gamma}\tilde X^{\hat\pi}(t)\bigg\{\theta(t)^*+\frac{q(t)}{p(t)}^*\sigma(t)^*(\sigma(t)\sigma(t)^*)^{-1}\sigma(t)\bigg\}dW(t)\ ,\\
\nonumber  dp(t)&=&-\frac{\gamma}{1-\gamma}\bigg\{\left((1-\gamma)r(t)+\frac{1}{2}\theta(t)^*\theta(t)-\frac{1}{2}\frac{q(t)}{p(t)}^*\sigma(t)^*(\sigma(t)\sigma(t)^*)^{-1}\sigma(t)\frac{q(t)}{p(t)}\right)p(t)\\
&&+\theta(t)^*q(t) +\frac{q(t)}{p(t)}^*\sigma(t)^*(\sigma(t)\sigma(t)^*)^{-1}\sigma(t)q(t) \bigg\}dt+ {q(t)^*} dW(t)\label{BSDE-multi-2}
\ea
with boundary conditions $\tilde X^{\hat\pi}(0)=U(x)$ and $p(T)=1$. We want to show that FBSDEs (\ref{SDE-multi-2}-\ref{BSDE-multi-2}) has a unique solution, which is denoted  by a triple processes $\{\tilde X^{\hat\pi}(t),\;p(t),\;q(t)\}_{0\leq t\leq T}$. \\
	
	It should be noticed that FBSDEs (\ref{SDE-multi-2}-\ref{BSDE-multi-2}) is a decoupled system in the sense that the coefficients of the backward equation \eqref{BSDE-multi-2}  do not involve  the forward process $\tilde X^{\hat\pi}(t)$. Besides, for a given $(p(t),\;q(t))$, the coefficients of the forward equation \eqref{SDE-multi-2} are globally Lipschitz in $\tilde X^{\hat\pi}(t)$, then it follows from the standard SDE theory that \eqref{SDE-multi-2} admits a unique solution provided the backward stochastic differential equation (BSDE) \eqref{BSDE-multi-2}  is uniquely solvable.  Therefore,  in order to establish the unique existence  of solution to FBSDEs  (\ref{SDE-multi-2}-\ref{BSDE-multi-2}),  it is sufficient to show  BSDE \eqref{BSDE-multi-2}  has a unique solution.\\
	
We introduce the new variables $\hat p(t)=\log p(t)$, and $\hat q(t)=\frac{q(t)}{p(t)}$ where we implicitly assume $p(t)>0$ for all $t$. The corresponding new FBSDEs satisfied by $\{\tilde X^{\hat\pi}(t),\;\hat{p}(t),\;\hat{q}(t)\}_{0\leq t\leq T}$ becomes
\ba
\nonumber d\tilde X^{\hat\pi}(t)&=&\frac{\gamma}{1-\gamma}\tilde X^{\hat\pi}(t)\bigg\{(1-\gamma)r(t)+\frac{1}{2}\theta^*\theta-\frac{1}{2}\hat q(t)^*\sigma(t)^*(\sigma(t)\sigma(t)^*)^{-1}\sigma(t)\hat q(t)\bigg\}dt\\
\label{SDE-multi-1}&&+\frac{\gamma}{1-\gamma}\tilde X^{\hat\pi}(t)\bigg\{\theta(t)^*+\hat q(t)^*\sigma(t)^*(\sigma(t)\sigma(t)^*)^{-1}\sigma(t)\bigg\}dW(t)
\ea
\ba
\nonumber d\hat p(t)&=&\bigg\{- \gamma r(t)-\frac{1}{2}\frac{\gamma}{1-\gamma}\theta(t)^*\theta(t)-\frac{1}{2}\frac{\gamma}{1-\gamma}\hat q(t)^*\sigma(t)^*(\sigma(t)\sigma(t)^*)^{-1}\sigma(t)\hat q(t) \\
& &\qquad -\frac{\gamma}{1-\gamma}\theta(t)^*\hat q(t) -\frac{1}{2}\hat q(t)^*\hat q(t)\bigg\}dt+\hat q^*(t)dW(t) \nonumber\\
  \nonumber &=&\bigg\{-\gamma r(t)-\frac{1}{2}\frac{\gamma}{1-\gamma}\theta(t)^*\theta(t)-\frac{1}{2}\hat q(t)^*\left(I_{n\times n}+\frac{\gamma}{1-\gamma}\sigma(t)^*(\sigma(t)\sigma(t)^*)^{-1}\sigma(t)\right)\hat q(t)\\
  & &\qquad -\frac{\gamma}{1-\gamma}\theta(t)^*\hat q(t)\bigg\}dt+\hat q(t)^*dW(t)\label{BSDE-multi-1}
\ea
with the boundary condition $\hat p(T)=0$ where $I_{n\times n}$ is the identity matrix. Note that the backward dynamics is quadratic in $\hat{q}(t)$. \\

The following spaces are needed to study the solution of quadratic FBSDEs (\ref{SDE-multi-1}-\ref{BSDE-multi-1}):
{\small	\begin{align*}
		&L^{2}([0,T]\times\Omega;\mathbb{R}^d) = \{X: [0,T]\times\Omega \to \mathbb{R}^d\ \text{is progressively measurable}, \|X\|^2_{L^2([0,T]\times\Omega)} <\infty\}, \\
	&L^{\infty}(\Omega;\mathbb{R}^d) = \{\xi: \Omega \to \mathbb{R}^d,\mathcal{F}_T-\text{measurable}, \|\xi\|_{\infty}  <\infty\}, \\
	& S_{\infty}([0,T]\times\Omega;\mathbb{R}^d) = \{\phi:[0,T]\times\Omega\to \mathbb{R}^d, \text{continuous, adapted},\  \|\phi\|_{T,\infty}  < \infty\},\\
	&BMO_T(Q) = \{N:[0,T]\times \Omega\to\mathbb{R}, 
	\mathbb{R}\text{-valued}\ Q\text{-martingale},\  \|N\|_{BMO_T(Q)}^2
	< \infty\} \ ,
	\end{align*}}
Here, $[[0, T]]$ is the collection of stopping times taking values in $[0, T]$,  
\[
\|X\|^2_{L^2([0,T]\times\Omega)} = \int_0^T\mathbb{E}[|X(t)|^2]dt ,\;  \|\xi\|_{\infty}= \text{ess}\sup_{\Omega}|\xi(\omega)|,\; \|\phi\|_{T,\infty}= \text{ess}\sup_{[0,T]\times\Omega}|\phi(t, \omega)|,
\]
and for a probability measure $Q$ on $(\Omega,\mathcal{F})$, the bounded mean oscillation (BMO) norm with respect to $Q$  is defined by
\[
\|N\|_{BMO_T(Q)}^2 = \text{ess}\sup_{[[0, T]]}\mathbb{E}^Q(\left<N\right>_{\tau,T} |F_\tau) 
\]
where $ \left<N\right>_{\tau,T}$ is the quadratic variation of the $Q$-martingale $N(\cdot)$ from $\tau$ to $T$ and the essential supremeum is taken over all possible stopping times $0\leq \tau \leq T$.
Our first result states that the quadratic FBSDEs (\ref{SDE-multi-1}-\ref{BSDE-multi-1}) admits a unique solution. 
\begin{theorem}\label{thm:FBSDE}
Assume {\rm(A1-A4)}. Let  $$\xi:=\int_0^T\left(\gamma r(t)+\frac{1}{2}\theta(t)^*\theta(t)\right)dt\ .$$ If $\|\xi\|_{\infty}<\infty$, then there exists a unique solution $(\tilde{X},\hat{p},\hat{q})\in L^2([0,T]\times\Omega;\mathbb{R})\times S_{\infty}([0,T]\times\Omega;\mathbb{R}^d)\times BMO_T(\mathbb{P})$ to the  quadratic FBSDEs  (\ref{SDE-multi-1}-\ref{BSDE-multi-1}). Namely, the martingale $$
\int_0^t \hat{q}(s) ^* dW(s),\ 0\leq t\leq T,
$$
is in $BMO_T(\mathbb{P})$. Define
\be\label{optimal-thm1}
\hat{\pi}(t)=\frac{1}{1-\gamma}(\sigma(t)\sigma(t)^*)^{-1}\large\left ((\mu(t)-r(t){\bf 1}) +\sigma(t)\hat{ q}(t)\large \right ).
\en
Then $\tilde{X}(t)=\tilde{X}^{\hat{\pi}}(t)$ is given by{\footnotesize
$$
\begin{array}{rl}
\tilde{X}(t)
=& \tilde{x} \exp\bigg ( \int_0^t \large \bigg \{ \gamma r(s)+ \frac 12 ( \frac{\gamma}{1-\gamma})(\frac{1- 2\gamma}{1-\gamma}) |\theta(s)|^2 \\
&-(\frac{\gamma}{1-\gamma} )^2 \theta(s)^*\hat{q}(s) -\frac 12 \frac{\gamma}{(1-\gamma)^2} \hat{q}(s)^*\sigma(s)^*(\sigma(s)\sigma(s)^*)^{-1}\sigma(s) \hat{q}(s)) \large \bigg \} ds \\
                   &\quad \qquad  + \frac{\gamma}{1-\gamma}\int_0^t\large \left\{ \theta(s)+ \sigma(s)^*(\sigma(s)\sigma(s)^*)^{-1} \sigma(s) \hat{q}(s) \large \right \}^* dW(s) \bigg).
\end{array}
$$}
Moreover,  we have 
\be\label{xp-optimal}
\tilde{X}^{\hat\pi}(t)p(t),\ 0\leq  t\leq T
\en
is a martingale. In particular, using $X^{\hat{\pi}}(t)=(\gamma \tilde{X}(t))^{\frac{1}{\gamma}}$, we have 
$$
\mathbb{E}[U(X^{\hat{\pi}}(T))]= \frac 1{\gamma} x^{\gamma} \exp(\hat{p}(0)). $$
\end{theorem}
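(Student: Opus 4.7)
I would decompose the theorem into four steps: (i) solve the BSDE \eqref{BSDE-multi-1} for $(\hat{p},\hat{q})$ as a standalone quadratic BSDE; (ii) plug the solution into the linear forward equation \eqref{SDE-multi-1} and integrate explicitly; (iii) verify the martingale identity for $\tilde{X}^{\hat\pi}(t)p(t)$ by an It\^o product computation; (iv) take expectations at $t=T$ using $p(T)=1$ to read off the utility. The decoupled structure noted right after \eqref{BSDE-multi-2} is what makes this program possible: once the BSDE is solved, the forward dynamics is geometric in $\tilde X$.

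For step (i), write the driver of \eqref{BSDE-multi-1} as
\[
f(t,q) \;=\; -\gamma r(t) - \tfrac12\tfrac{\gamma}{1-\gamma}|\theta(t)|^{2} - \tfrac{\gamma}{1-\gamma}\theta(t)^{*}q - \tfrac12\, q^{*}A(t)q,
\]
with $A(t)=I_{n\times n}+\tfrac{\gamma}{1-\gamma}\sigma(t)^{*}(\sigma(t)\sigma(t)^{*})^{-1}\sigma(t)$. By (A1) and (A3), $A(t)$ is symmetric and uniformly bounded; the quadratic growth constant in $q$ is therefore deterministic and uniform. The terminal condition $\hat p(T)=0$ is trivially bounded, and the hypothesis $\|\xi\|_{\infty}<\infty$ gives a pathwise uniform bound on the ``data part" $\int_{0}^{T}|f(s,0)|\,ds$. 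This places the BSDE in the scope of the Kobylanski existence and uniqueness theorem (in multidimensional $q$, the Briand--Hu version), producing a unique solution with $\hat p\in S_{\infty}([0,T]\times\Omega;\mathbb{R})$. The $BMO_{T}(\mathbb{P})$ property of $\int_{0}^{\cdot}\hat q(s)^{*}dW(s)$ then follows from the standard energy inequality: applying It\^o to $e^{2\hat p(t)}$ between a stopping time $\tau$ and $T$ and using the uniform bounds yields $\mathbb{E}[\int_{\tau}^{T}|\hat q(s)|^{2}ds\,|\,\mathcal F_{\tau}]\le C$.

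For step (ii), given $(\hat p,\hat q)$, equation \eqref{SDE-multi-1} is linear in $\tilde X^{\hat\pi}$; applying It\^o to $\log\tilde X^{\hat\pi}$ gives exactly the explicit exponential stated. The BMO property of $\hat q$ together with Kazamaki's criterion ensures the exponential is a genuine positive process with all the moments needed for $\tilde X^{\hat\pi}\in L^{2}([0,T]\times\Omega;\mathbb{R})$. For step (iii), compute $d(\tilde X^{\hat\pi}p)$ with $p=e^{\hat p}$ via the product rule, substituting \eqref{SDE-multi-2} and the dynamics obtained from \eqref{BSDE-multi-1} by It\^o on $e^{\hat p}$. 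The finite-variation part cancels identically: this cancellation is the defining property of the dual BSDE in the stochastic maximum principle, since $\hat\pi$ from \eqref{optimal-thm1} is the first-order minimizer of the Hamiltonian \eqref{Hamiltonian_power-multi}. Hence $\tilde X^{\hat\pi}p$ is a local martingale; to upgrade to a true martingale, I would again invoke Kazamaki/John--Nirenberg: $\hat q$ is BMO, $\theta$ enters only through $\int\theta^{*}dW$ which is bounded in $L^{2}$ under $\|\xi\|_{\infty}<\infty$, and $\hat p$ is bounded, so uniform integrability follows. Step (iv) is then immediate: $\mathbb{E}[\tilde X^{\hat\pi}(T)p(T)]=\tilde X^{\hat\pi}(0)p(0)=U(x)e^{\hat p(0)}$, and $p(T)=1$ together with $\tilde X^{\hat\pi}(T)=U(X^{\hat\pi}(T))$ yields the utility formula.

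\textbf{Main obstacle.} The subtle step is (i): the factor $Y$ and hence $\theta$ may be unbounded, so $f(t,0)$ is not bounded pointwise in $t$, only the time integral is. Standard Kobylanski theory assumes pathwise bounded drivers, so the correct reference is either the Briand--Hu extension for quadratic BSDEs with $e^{\gamma\int|f(s,0)|ds}\in L^{1}$, or a truncation/stability argument in which one approximates $\theta$ by bounded $\theta_{n}$, extracts a limit using $L^{\infty}$ bounds from $\|\xi\|_{\infty}<\infty$, and passes to the limit in the BMO norm. The other delicate point, upgrading the local martingale in (iii) to a true martingale, is handled by the BMO toolkit noted above, but requires one to check BMO-ness for the combined integrand, not just $\hat q$.
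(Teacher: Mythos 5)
Your plan is correct in outline, and steps (ii)--(iv) coincide with the paper's argument: the explicit exponential comes from It\^o applied to $\log\tilde X^{\hat\pi}$, the drift of $\tilde X^{\hat\pi}(t)p(t)$ cancels under the product rule so it is a local martingale (the paper then upgrades it using $\|\hat p\|_{T,\infty}<\infty$ together with $\tilde X^{\hat\pi}\in L^2$, rather than Kazamaki, but both routes work), and the utility formula is read off at $t=T$ from $p(T)=1$. The genuine difference is step (i). The paper does not invoke Kobylanski or Briand--Hu; it follows Tevzadze's contraction method: it first removes the linear term $\frac{\gamma}{1-\gamma}\theta(t)^*\hat q(t)$ by the Girsanov change of measure $d\tilde{\mathbb P}=\mathcal E_T(\tilde\gamma\theta^*\cdot W)\,d\mathbb P$ and absorbs the zeroth-order part of the driver into the terminal condition via $\tilde p(t)=\hat p(t)+\int_0^t\tilde f(s,0)\,ds$, so that the transformed BSDE has a purely quadratic driver and bounded terminal value $\xi$; it then runs a fixed-point argument in $S^{\infty}\times BMO_T$ when $\|\xi\|_\infty$ is small, treats general $\xi$ by splitting $\xi=\sum_{j}\xi^{(j)}$ into small pieces solved iteratively under further Girsanov transformations, and proves uniqueness by linearizing the driver and changing measure with the resulting BMO integrand (your energy-inequality computation with $e^{\lambda\hat p}$ appears there as the lemma guaranteeing any bounded solution has $\hat q\cdot W\in BMO_T(\mathbb P)$). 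This construction sidesteps exactly the obstacle you flag---$r$ and $\theta$ need not be bounded pointwise, only $\int_0^T(\gamma r+\frac12|\theta|^2)\,dt$ is---because the unbounded data is moved wholesale into the terminal condition, and it delivers the $BMO_T$ estimate on $\hat q$ as part of the solution space rather than a posteriori. Your Kobylanski/truncation route can be made rigorous (and you correctly identify that the classical statements require pathwise-bounded drivers, hence the need for a Briand--Hu-type extension or a stability limit), but the shift-and-Girsanov reduction is the cleaner device here and is the one the paper actually uses.
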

\begin{proof}
We apply the method developed in \cite{tevzadze2008solvability} (see also \cite{touzi2012optimal}) to prove the unique existence of solution $(\tilde{X},\hat{p},\hat{q})\in L^2([0,T]\times\Omega;\mathbb{R})\times S_{\infty}([0,T]\times\Omega;\mathbb{R}^d)\times BMO_T(\mathbb{P})$. The technical proof is given in Appendix \ref{Proof-FBSDE}. 

We next prove that $\tilde{X}^{\hat\pi}(t)p(t)$ is a martingale. An application of It\^o formula to $\tilde X^{\hat\pi}(t)p(t)$ yields:  
\ba
\nonumber&&d\tilde X^{\hat\pi}(t)p(t)\\
\nonumber &=&p(t)d\tilde X^{\hat\pi}(t)+\tilde X^{\hat\pi}(t)dp(t)+d\tilde X^{\hat\pi}(t)dp(t)\\
\nonumber&=&\frac{\gamma}{1-\gamma}p(t)\tilde X^{\hat\pi}(t)\bigg\{\theta(t)^*+ \hat q(t)^*\sigma(t)^*(\sigma(t)\sigma(t)^*)^{-1}\sigma(t)\bigg\}dW(t)+ \tilde X^{\hat\pi}(t) p(t)q(t)^*dW(t),\\
\label{xp-optimal}
\ea
which implies $\tilde{X}^{\hat\pi}(t)p(t)$ is a local martingale. Besides, in light of the fact $p(t)=\exp(\hat{p}(t))$ and the established results $\mathbb{E}[\int_0^T|\tilde{X}^{\hat\pi}(t)|^2dt]<\infty$ and $\|\hat{p}\|_{T,\infty}<\infty$, we can obtain that 
$$\mathbb{E}\left[\int_0^T|\tilde X^{\hat\pi}(t)p(t)|^2dt\right]\leq \|p\|^2_{T,\infty}\cdot \mathbb{E}\left[\int_0^T|\tilde X^{\hat\pi}(t)|^2dt\right]= \exp(\|\hat{p}\|_{T,\infty})\cdot \mathbb{E}\left[\int_0^T|\tilde X^{\hat\pi}(t)|^2dt\right]<\infty\ . $$
Therefore, we can conclude that $\tilde{X}^{\hat\pi} (t)p(t)$ is a martingale.

\end{proof}

The second main result of this paper is the following verification theorem that ensures $\hat{\pi}$ defined by \eqref{optimal-thm1} is the optimal strategy for our proposed portfolio problem \eqref{power-utility}. 
\begin{theorem}\label{thm:super-mart}
Suppose that the quadratic FBSDEs (\ref{SDE-multi-1}-\ref{BSDE-multi-1}) has a solution $(\tilde{X},\hat{p},\hat{q})$ such that $\tilde{X} (t)p(t)$ is a martingale where $p(t)=\exp(\hat{p}(t))$, then $\hat{\pi}$ defined in \eqref{optimal-thm1} is the  optimal portfolio for the primal problem \eqref{power-utility} and the corresponding maximum utility is given by $\frac 1{\gamma} x^{\gamma} \exp(\hat{p}(0))$.
\end{theorem}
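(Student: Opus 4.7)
The plan is a standard verification argument exploiting the concavity of the Hamiltonian together with the martingale property granted by hypothesis. For an arbitrary admissible portfolio $\pi$, I want to show that $\tilde{X}^{\pi}(t)p(t)$ is a supermartingale on $[0,T]$; then, since $p(T)=e^{\hat{p}(T)}=1$ and $\tilde{X}^{\pi}(T)=U(X^{\pi}(T))$, we obtain
\begin{equation*}
\mathbb{E}[U(X^{\pi}(T))] = \mathbb{E}[\tilde{X}^{\pi}(T)p(T)] \le \tilde{X}^{\pi}(0)p(0) = U(x)\exp(\hat{p}(0)) = \tfrac{1}{\gamma}x^{\gamma}\exp(\hat{p}(0)).
\end{equation*}
Because by hypothesis $\tilde{X}^{\hat{\pi}}(t)p(t)$ is a true martingale, the inequality is an equality at $\pi=\hat{\pi}$, which simultaneously establishes optimality and identifies the value function.

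The key step is to compute the drift of $\tilde{X}^{\pi}(t)p(t)$. First I would recall from \eqref{tildex-multi} that $d\tilde{X}^{\pi}$ has drift $\gamma\tilde{X}^{\pi}\{r+\pi^{*}(\mu-r\mathbf{1})-\tfrac{1}{2}(1-\gamma)\pi^{*}\sigma\sigma^{*}\pi\}$, while from the adjoint equation $dp = -\partial_{\tilde{x}}H(\hat{\pi},\tilde{X}^{\hat{\pi}},p,q)dt + q^{*}dW$ and the linearity of $H$ in $\tilde{x}$, the drift of $p$ equals $-\gamma\{(r+\hat{\pi}^{*}(\mu-r\mathbf{1})-\tfrac{1}{2}(1-\gamma)\hat{\pi}^{*}\sigma\sigma^{*}\hat{\pi})p + \hat{\pi}^{*}\sigma q\}$. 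Combining these with the quadratic covariation $d\langle\tilde{X}^{\pi},p\rangle = \gamma\tilde{X}^{\pi}\pi^{*}\sigma q\,dt$ and using the first-order condition $p(\mu-r\mathbf{1})+\sigma q = (1-\gamma)p\,\sigma\sigma^{*}\hat{\pi}$ (which is precisely the optimality defining $\hat{\pi}$), the drift collapses to
\begin{equation*}
-\tfrac{1}{2}\gamma(1-\gamma)\,\tilde{X}^{\pi}(t)\,p(t)\,\bigl|\sigma(t)^{*}(\pi(t)-\hat{\pi}(t))\bigr|^{2}\,dt,
\end{equation*}
after the elementary rearrangement $(a-b)^{*}b - \tfrac{1}{2}a^{*}a + \tfrac{1}{2}b^{*}b = -\tfrac{1}{2}|a-b|^{2}$ with $a=\sigma^{*}\pi$, $b=\sigma^{*}\hat{\pi}$. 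Since $\gamma\in(0,1)$ and $\tilde{X}^{\pi}, p > 0$, this drift is nonpositive, so $\tilde{X}^{\pi}(t)p(t)$ is a local supermartingale, and it vanishes identically when $\pi=\hat{\pi}$ (yielding a local martingale, which is a true martingale by hypothesis).

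The main obstacle is the promotion from local supermartingale to true supermartingale for a general admissible $\pi$, because admissibility is only $\int_{0}^{T}|\pi(t)|^{2}dt<\infty$ a.s.\ and the stochastic integrand is not a priori in $L^{2}$. I would handle this by exploiting nonnegativity: since $X^{\pi}(t)\ge 0$ gives $\tilde{X}^{\pi}(t)=\tfrac{1}{\gamma}(X^{\pi}(t))^{\gamma}\ge 0$ and $p(t)=e^{\hat{p}(t)}>0$, the process $\tilde{X}^{\pi}p$ is a \emph{nonnegative} local supermartingale, hence a supermartingale by Fatou's lemma applied along a localizing sequence of stopping times. This is exactly the standard device used to close such verification arguments in utility maximization (see e.g.\ the reference to \cite{HI2005} cited in the introduction), and it requires no further integrability on $\pi$.

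Finally, reading the chain of (in)equalities, the supermartingale property yields $\mathbb{E}[U(X^{\pi}(T))]\le \tfrac{1}{\gamma}x^{\gamma}\exp(\hat{p}(0))$ for every admissible $\pi$, with equality at $\hat{\pi}$ by the hypothesized martingale property of $\tilde{X}^{\hat{\pi}}p$. This simultaneously proves that $\hat{\pi}$ attains the supremum in \eqref{power-utility} and that the optimal value equals $\tfrac{1}{\gamma}x^{\gamma}\exp(\hat{p}(0))$, completing the verification.
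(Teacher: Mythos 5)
Your proof is correct and follows essentially the same route as the paper's: show that $\tilde X^{\pi}(t)p(t)$ is a nonnegative (local, hence true) supermartingale for every admissible $\pi$ and a true martingale at $\hat\pi$, then compare terminal expectations using $p(T)=1$. Your completed-square form of the drift, $-\tfrac{1}{2}\gamma(1-\gamma)\tilde X^{\pi}(t)p(t)\,|\sigma(t)^{*}(\pi(t)-\hat\pi(t))|^{2}$, obtained via the first-order condition, is equivalent to (and somewhat cleaner than) the paper's explicit expression, and your Fatou/localization argument for promoting the local supermartingale to a true one makes explicit a step the paper only gestures at via positivity.
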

 \begin{proof}
 Equations (\ref{SDE-multi-1}-\ref{BSDE-multi-1}) has solution $(\tilde{X}, \hat{p}, \hat{q})$ implies the equations (\ref{SDE-multi-2}-\ref{BSDE-multi-2}) has solution $(\tilde{X}, p, q)$, where $p(t)=\exp(\tilde{p}(t)), q(t)= p(t)\hat{q}(t)$.
Since $\tilde{X}^{\hat\pi}(t)p(t)$ is a martingale, then $\EE[\tilde X^{\hat\pi}(T)p(T)]=\tilde X^{\hat\pi}(0)p(0)$. Define $\hat{\pi}(t)$. We can show $(\gamma \tilde{X}(t))^{1/\gamma}=X^{\hat{\pi}}(t)$ by Ito's formula,
$$
\begin{array}{rl}
d (\gamma \tilde{X}(t))^{1/\gamma}= & \frac{1}{\gamma} (\gamma)^{1/\gamma}(\tilde{X}(t))^{\frac{1}{\gamma}-1} d\tilde{X}(t)\\
                                                          & + \frac{1-\gamma}{2\gamma^2}  (\gamma)^{1/\gamma} (\tilde{X}(t))^{\frac{1}{\gamma}-2}(\frac{\gamma}{1-\gamma})^2 (\tilde{X}(t))^2 |\theta(t)+ \sigma(t)^*(\sigma(t)\sigma(t)^*)^{-1} \sigma(t) \hat{q}(t)|^2 dt,
\end{array}
$$ 
and after simplifying the equation, we have 
$$
d (\gamma \tilde{X}(t))^{1/\gamma}=  (\gamma \tilde{X}(t))^{1/\gamma}\{ ( r(t) + \hat{\pi}(t)^* (\mu(t)-r(t){\bf 1})) dt  + \hat{\pi}(t)^* \sigma(t) dW(t) \}.
$$
Here $\hat{\pi}(t)$ is given by (\ref{optimal-thm1}).

For an arbitrary strategy $\pi$, by applying It\^o formula to $\tilde X^{\pi}(t)p(t)$, we  can obtain that 
\ba
\nonumber &&d\tilde X^{\pi}(t)p(t)\\
\nonumber &=&p(t)d\tilde X^{ \pi}(t)+\tilde X^{ \pi}(t)dp(t)+d\tilde X^{ \pi}(t)dp(t)\\
\nonumber&=&-\frac{1}{2}\frac{\gamma}{1-\gamma}\tilde X^{\pi}(t)p(t)\bigg\{\bigg|(1-\gamma)\pi(t)^*\sigma(t)+\theta(t)^*+\frac{q(t)}{p(t)}^*\sigma(t)^*(\sigma(t)\sigma(t)^*)^{-1}\sigma(t)\bigg|^2\bigg\}dt\\
\nonumber&&+\tilde X^{\pi}(t)p(t)\bigg\{\gamma \pi(t)^*\sigma(t)-\frac{q(t)}{p(t)}\bigg\}dW(t)\\
&\leq&  \tilde X^{\pi}(t)p(t)\bigg\{\gamma \pi(t)^*\sigma(t)-\frac{q(t)}{p(t)}\bigg\}dW(t).
\ea
By using $\tilde X^{ \pi}(t)p(t)>0$ for $0\leq t \leq T$, $\tilde{X}(t) p(t)$ is a nonnegative supermartingale. We can conclude
$$
\mathbb{E}[\tilde{X}^{\pi}(T)p(T)]\leq \tilde{x} p(0)= \mathbb{E}[\tilde{X}^{\hat{\pi}}(T)p(T)].
$$
Since $p(T)=1$ and 
$$
\tilde{X}^{\pi}(T)=U(X^{\pi}(T)), \ \tilde{X}(T)=U(X^{\hat{\pi}}(T)),\  \tilde{x}=U(x)=\frac{1}{\gamma} x^{\gamma}, 
$$
we have 
$$
\mathbb{E}[U(X^{\pi}(T))]\leq \mathbb{E}[U(X^{\hat{\pi}}(T))]=\frac{1}{\gamma} x^{\gamma} p(0)= \frac{1}{\gamma} x^{\gamma} \exp(\hat{p}(0)) \ ,
$$
which implies the optimality of $\hat{\pi}$.
\end{proof}
Note that the optimal portfolio strategy can be written as  
\be 
\hat\pi=\frac{1}{1-\gamma}(\sigma\sigma^*)^{-1}(\mu-r{\bf 1})+\frac{1}{1-\gamma}(\sigma\sigma^*)^{-1}\sigma \hat q,
\en
where the first term is proportional to the Sharpe ratio $(\sigma\sigma^*)(\mu- r{\bf 1})$ which is independent to the wealth $X^{\hat{\pi}}(t)$ and equal to the solution of the  original Merton problem, the second term is by solving the BSDE (\ref{BSDE-multi-1}).

\section{Complete Market Analysis}\label{sec:CMA}
In Section \ref{sec:main_result}, we developed the solution for the optimization problem \eqref{power-utility} by using the FBSDE approach. It should be noticed that our general result holds without requiring the market to be complete. In this section, we develop another different martingale approach to solve the primal problem \eqref{power-utility} under the complete market assumption. This result provides  a comprehensive investigation of the connection between FBSDE and martingale approaches similar to the discussion in \cite{HH2014}. \\

If the market is complete, that $\sigma(t)^{-1}$ exists, then (\ref{SDE-multi-2}-\ref{BSDE-multi-2}) become
\be\label{SDE-complete}
d\tilde{X}^{\hat{\pi}}(t)=   \frac{\gamma}{1-\gamma} \tilde{X}^{\hat{\pi}}(t)\{((1-\gamma)r(t) + \frac 12 |\theta(t)|^2 - \frac 12 |\hat{q}(t)|^2 ) dt  + (\theta(t)+ \hat{q}(t))^* dW(t),
\en
and
\be\label{BSDE-complete}
dp(t) = p(t) \left\{ -\gamma r(t) -\frac{\gamma}{2(1-\gamma)} |\theta(t)|^2  - \frac{\gamma}{2(1-\gamma)} |\hat{q}(t)|^2-\frac{\gamma}{1-\gamma} \theta(t)^* \hat{q}(t) ) dt  + \hat{q}(t)^* dW(t) \right\}.
\en
On the other hand, the optimal strategy for our portfolio problem  \eqref{wealth} and \eqref{power-utility} can be obtained through the martingale method, which is an application of the duality technique and the martingale representation theorem. Referring to Chapter 3 in \cite{KS1998}, we review the martingale method in Appendix \ref{Martingale-App}. Based on $U(x)=\frac{1}{\gamma}x^{\gamma}$ with $0<\gamma<1$, the optimal portfolio is given by 
\be\label{optimal-martingale}
\hat\pi (t)=(\sigma(t)^{*})^{-1}\left( \theta(t)+  \psi(t)M(t)^{-1}\right)
\en 
where $M(t)$ is a martingale defined by
\be\label{martingale-complete}
M(t)=\EE \left[ H_0(T)({\mathcal{Z}}(x)H_0(T))^{\frac{1}{\gamma-1}}\bigg|{\mathcal{F}}_t\right]
\en
with 
\[
H_0(t)=\exp\left(-\int_0^t  \theta(s)^*dW(s)-\frac{1}{2}\int_0^t\left| \theta(s)\right|^2ds-\int_0^tr(s)ds\right)
\]
and
\be
 \theta(t)=\sigma(t)^{-1}\left(\mu(t)-r(t){\bf 1}\right),\ 
\en
 and  ${\mathcal{Z}}(x)$ is the unique solution such that $$\EE \left[ H_0(T)({\mathcal{Z}}(x)H_0(T))^{\frac{1}{\gamma-1}}\right]=x\ .$$ 
Therefore, by setting $\phi=\EE\left[H_0(T)^{-\frac{\gamma}{1-\gamma}}\right]$, we can obtain
\be\label{zx}
{\mathcal{Z}}(x)^{\frac{1}{\gamma-1}}=\frac{x}{\phi}
\en
and
\[
M(t)=\frac{x}{\phi}\EE\left[H_0(T)^{-\frac{\gamma}{1-\gamma}}\bigg|{\mathcal{F}}_t\right]\ .
\]
Moreover, $\psi(\cdot)$ is a square integrable process comes from the martingale representation theorem:
	\be\label{M-psi}
	M(t)=x+\int_0^t\psi(s)^*dW(s)\ .
	\en \\

We now investigate the relation between the BSDE $(p(t),\hat q(t))$ and the martingale $M(t)$.  In addition, we guarantee 
\[
p(t)=\phi^{1-\gamma}e^{\int_0^t\left(-\gamma r(s)-\frac{\gamma}{2(1-\gamma)}| \theta(s)|^2-\frac{\gamma}{1-\gamma} \theta(s)^*\hat q(s)-\frac{1}{2(1-\gamma)}|\hat q(s)|^2\right)ds+\int_0^t\hat q(s)^*dW(s)}
\] 
with $$\hat q(t)=(1-\gamma)\psi(t)M(t)^{-1}-\gamma \theta(t)$$ being the solution for \eqref{BSDE-multi-2} by verifying the terminal condition $p(T)=1$.

\begin{theorem}\label{thm:mart-FBSDE}
Define $M(t)$ as in \eqref{martingale-complete}, $\hat{\pi}(t)$ as in \eqref{optimal-martingale}, and 
\be\label{3-1}
\hat{q}(t)=-\gamma \theta(t) + (1-\gamma) \psi(t)M(t)^{-1}
\en
so that \eqref{optimal-thm1} holds. Then 
\be\label{3-2}
M(t)=H_0(t)X^{\hat{\pi}}(t).
\en 
Define 
$$
\tilde{X}^{\hat{\pi}}(t)=\frac{1}{\gamma} (X^{\hat{\pi}}(t))^{\gamma}.
$$
Then \eqref{SDE-complete} holds.  Define also $p(t)$ by \eqref{BSDE-complete}. Then we have 
\be\label{3-3}
p(t)= p(0) H_0(t)^{\gamma} M(t)^{1-\gamma}.
\en
In particular, we have $p(T)=p(0) \mathcal{Z}(x)^{-1}=1$ if we choose $p(0)=\mathcal{ Z}(x)$.
\end{theorem}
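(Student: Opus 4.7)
The plan is to establish the three displayed identities \eqref{3-2}, $\tilde X^{\hat\pi}$ satisfying \eqref{SDE-complete}, and \eqref{3-3} in sequence, each by a direct Itô computation plus uniqueness of the corresponding SDE. The key algebraic relation underlying everything is that \eqref{3-1} inverts \eqref{optimal-thm1} into the complete-market form \eqref{optimal-martingale}, i.e.\ $\theta(t)+\psi(t)M(t)^{-1} = (\theta(t)+\hat q(t))/(1-\gamma)$, which lets us translate freely between the ``martingale side'' (written in $\psi M^{-1}$) and the ``BSDE side'' (written in $\hat q$).

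For \eqref{3-2}, I would apply the product Itô rule to $H_0(t)X^{\hat\pi}(t)$, using $dH_0 = H_0(-r\,dt - \theta^\ast dW)$ from the definition of $H_0$, together with the wealth dynamics \eqref{wealth} under $\pi=\hat\pi$. In the complete-market regime $\hat\pi^\ast(\mu-r\mathbf 1)=(\theta+\psi M^{-1})^\ast\theta$ and $\hat\pi^\ast\sigma=(\theta+\psi M^{-1})^\ast$, so the drift of $H_0 X^{\hat\pi}$ collapses exactly and the diffusion becomes $H_0(t)X^{\hat\pi}(t)\,(\psi(t)M(t)^{-1})^\ast dW(t)$. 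Thus $H_0 X^{\hat\pi}$ solves the same linear SDE as $M$ (namely $dM=\psi^\ast dW$ after dividing by $M$), and matching the initial values requires $M(0)=x$: using the characterization $\mathcal Z(x)^{1/(\gamma-1)}=x/\phi$ from \eqref{zx} this is a one-line check. Uniqueness of the linear SDE then gives \eqref{3-2}.

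The SDE \eqref{SDE-complete} for $\tilde X^{\hat\pi}=(X^{\hat\pi})^\gamma/\gamma$ is a routine application of Itô to $x\mapsto x^\gamma/\gamma$ combined with the wealth dynamics; the only bookkeeping step is to substitute $\theta+\psi M^{-1} = (\theta+\hat q)/(1-\gamma)$ so that the coefficients line up with the stated form containing $\theta$ and $\hat q$. For \eqref{3-3}, I would write $\tilde\psi:=\psi M^{-1}$, compute $d(H_0^\gamma M^{1-\gamma})$ by the product and power Itô formulas, and show that the resulting logarithmic dynamics is
\[
\frac{d(H_0^\gamma M^{1-\gamma})}{H_0^\gamma M^{1-\gamma}} = \Bigl\{-\gamma r - \tfrac{\gamma(1-\gamma)}{2}|\theta|^2 - \tfrac{\gamma(1-\gamma)}{2}|\tilde\psi|^2 - \gamma(1-\gamma)\theta^\ast\tilde\psi\Bigr\}dt + \bigl(-\gamma\theta + (1-\gamma)\tilde\psi\bigr)^\ast dW.
\]
Substituting $\hat q=-\gamma\theta+(1-\gamma)\tilde\psi$ and expanding $|\hat q|^2$, $\theta^\ast\hat q$, the drift rearranges into exactly the coefficient appearing in \eqref{BSDE-complete}, while the diffusion becomes $\hat q^\ast dW$. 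So $H_0^\gamma M^{1-\gamma}$ and $p/p(0)$ solve the same linear SDE with identical initial value $1$, and uniqueness yields \eqref{3-3}.

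The final terminal-value claim is direct: from the definition of $M$, $M(T)=H_0(T)(\mathcal Z(x)H_0(T))^{1/(\gamma-1)}$, so $M(T)^{1-\gamma} = H_0(T)^{-\gamma}\mathcal Z(x)^{-1}$, hence $p(T)=p(0)H_0(T)^\gamma M(T)^{1-\gamma} = p(0)\mathcal Z(x)^{-1}$, which equals $1$ under the normalization $p(0)=\mathcal Z(x)$. The main obstacle is the matching of the three scalar coefficients ($|\theta|^2$, $\theta^\ast\hat q$, $|\hat q|^2$) in Step 3: these involve cancellations between the cross-variation of $H_0^\gamma$ and $M^{1-\gamma}$ and the quadratic expansion of $\hat q$, and this is where an algebra slip is most likely; everything else is linear-SDE uniqueness once the coefficients are verified.
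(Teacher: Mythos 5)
Your proposal is correct and follows essentially the same route as the paper: the paper proves only \eqref{3-3} by solving the linear SDE \eqref{BSDE-complete} explicitly after substituting $\hat q=-\gamma\theta+(1-\gamma)\psi M^{-1}$ and recognizing the resulting exponential as $H_0^\gamma M^{1-\gamma}$, which is the same coefficient-matching algebra you perform in the opposite direction by differentiating $H_0^\gamma M^{1-\gamma}$; your verification of the three drift coefficients $-\tfrac{\gamma(1-\gamma)}{2}|\theta|^2$, $-\gamma(1-\gamma)\theta^*\tilde\psi$, $-\tfrac{\gamma(1-\gamma)}{2}|\tilde\psi|^2$ checks out. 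You are in fact more complete than the paper, which dismisses \eqref{3-2} and \eqref{SDE-complete} with ``we only need to prove \eqref{3-3}'' while you supply the It\^o product computation for $H_0X^{\hat\pi}$ and the check $M(0)=\mathcal{Z}(x)^{1/(\gamma-1)}\phi=x$ via \eqref{zx}. One small caveat, shared with the paper's own statement: since $M(0)=x$, the initial value of $H_0(t)^\gamma M(t)^{1-\gamma}$ is $x^{1-\gamma}$ rather than $1$, so the uniqueness argument actually yields $p(t)=p(0)\,H_0(t)^\gamma\bigl(M(t)/x\bigr)^{1-\gamma}$; the identity \eqref{3-3} as written, and hence the normalization $p(0)=\mathcal{Z}(x)$ versus $p(0)=\phi^{1-\gamma}$ asserted just before the theorem, are consistent only up to this constant factor $x^{1-\gamma}$, which does not affect the terminal-value conclusion once $p(0)$ is chosen to make $p(T)=1$.
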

\begin{proof}
We only need to prove (\ref{3-3}). By \eqref{BSDE-complete}, we have 
\ban
dp(t)&=&p(t)\bigg\{ (-\gamma r(t) -\frac {\gamma}{2(1-\gamma)} |\theta(t)|^2  -\frac{\gamma}{2(1-\gamma)} |-\gamma \theta(t) + (1-\gamma) \psi(t)M(t)^{-1}|^2\\
       & & -\frac{\gamma}{1-\gamma} \theta(t)^* (|-\gamma \theta(t) + (1-\gamma) \psi(t)M(t)^{-1}) dt \\
        &&+ (-\gamma \theta(t) + (1-\gamma) \psi(t)M(t)^{-1})^* dW(t)\bigg\}.
\ean
Here we use (\ref{3-1}). Then
\ban
p(t)&=& p(0)\exp\bigg\{\int_0^t   (-\gamma \theta(s) + (1-\gamma) \psi(s)M(s)^{-1}) ^* dW(s)\\
&&-\frac 12 \int | -\gamma \theta(s) 
        + (1-\gamma) \psi(s)M(s)^{-1}|^2 dt\\
      &&  + \int_0^t   (-\gamma r(s) -\frac {\gamma}{2(1-\gamma)} |\theta(s)|^2  -\frac{\gamma}{2(1-\gamma)} |-\gamma \theta(s) + (1-\gamma) \psi(s)M(s)^{-1}|^2\\
      && -\frac{\gamma}{1-\gamma} \theta(s)^* (|-\gamma \theta(s) + (1-\gamma) \psi(s)M(s)^{-1})  ds\bigg\}.
\ean
After simplification, we have 
 \ba
\nonumber p(t)&=&p(0)\exp\bigg\{\int_0^t   (-\gamma \theta(s) + (1-\gamma) \psi(s)M(s)^{-1}) ^* dW(s)-\frac{\gamma}{2} \int_0^t |\theta(s)|^2 ds \\
\nonumber        &&  -\frac{1}{2}(1-\gamma)\int_0^t |\psi(s)M(s)^{-1}|^2\bigg\} \\
      &=& p(0) H_0(t)^{\gamma} M(t)^{1-\gamma}.\label{3-4}
\ea
Here we use 
$$
dM(t)=M(t) (\psi(t)M(t)^{-1})^* dW(t),
$$ 
which can be uniquely solved to obtain an expression of $M(t)$,
$$
M(t)=M(0) \exp(\int_0^t (\psi(s)M(s)^{-1}) ^* dW(s) - \frac 12 \int_0^t | \psi(s)M(s)^{-1}|^2 ds).
$$

From (\ref{martingale-complete}), we have
\be\label{3-5}
M(T)=H_0(T)(\mathcal{Z}(x) H_0(T))^{-\frac{1}{1-\gamma}}=(\mathcal{Z}(x))^{-\frac{1}{1-\gamma}} H_0(T)^{-\frac{\gamma}{1-\gamma}}.
\en
Therefore, putting (\ref{3-4}) and (\ref{3-5})  together, we finally obtain
$$
p(T)=p(0)\mathcal{Z}(x)^{-1}
$$ 
which gives $p(T)=1$ if we take $p(0)=\mathcal{Z}(x)$. This completes the proof.

\end{proof}

Although we cannot directly solve the quadratic FBSDEs (\ref{SDE-multi-1}-\ref{BSDE-multi-1}), the numerical scheme such as regression based algorithms can be applied. See \cite{BZ2008} for instance. In some particular cases discussed in Section \ref{particular-case}, the solution of the FBSDEs (\ref{SDE-multi-1}-\ref{BSDE-multi-1})  can be explicitly obtained.

\section{Two Particular Cases }\label{particular-case}
This section is devoted to analyzing two particular cases where the quadratic FBSDEs (\ref{SDE-multi-1}-\ref{BSDE-multi-1}) and the corresponding optimal strategies \eqref{optimal-thm1} can be obtained explicitly.

\subsection{ Infinite Delay Time}   
In this case, we consider $\delta=\infty$ in \eqref{def:V(t)}, i.e. $$V(t)=\int_{-\infty}^0e^{\lambda s}h(Y(t+s))ds,$$ and there is no pointwise delay factor $Z(t)$ in the dynamics $S^0$ and $S$.  For simplicity, we assume $N=m=n=1$ treated as the complete market. Note that given the dimension of the factor $n=1$ such that $ \sigma_F^*(I_{n\times n}+\frac{\gamma}{1-\gamma}\sigma^*(\sigma\sigma^*)^{-1}\sigma)\sigma_F$ is a $1$-dimensional process, the case of incomplete markets ($m<N$) is also solvable using the same argument. Hence, $W(t)$ is a 1-dimensional Brownian motion, and  
\ban
&&r(t)=r(Y(t), V(t))\ , \ \mu(t)=\mu(Y(t),V(t))\ , \ \sigma(t)=\sigma(Y(t),V(t))\ ,\\
&&b(t)=b(Y(t),V(t)),\,\sigma_F(t)=\sigma_F(Y(t),V(t))
\ean
and
\ban
\theta(Y(t), V(t))&=& \sigma(Y(t), V(t)) ^{-1}(\mu(Y(t), V(t))-r(Y(t), V(t)){\bf 1})\\
&& .
\ean
The differential form for the equation $V$ can be written as
\[
dV(t)=\left(h(Y(t))-\lambda V(t)\right)dt. 
\]
 We have the following results for the solution $(\hat{p}, \hat{q})$ of BSDE and the optimal strategy $\hat{\pi}$.
 
\begin{prop}
The solution $(\hat{p}, \hat{q})$ for the BSDE is given by
\ba
\hat{p}(t)=\eta(t, Y(t), V(t)),\ \hat{q}(t)= \sigma_F(Y(t), V(t)) \partial_y (t, Y(t), V(t)),
\ea
where 
\ba
\nonumber&&\eta(t,y,v)\\
\nonumber &=&(1-\gamma)\log\left(\EE_{t,y,v}\left[\exp\left\{\int_t^T\frac{\gamma}{1-\gamma} r(\tilde Y(s),  V(s))+\frac{1}{2}\frac{\gamma}{(1-\gamma)^2}\theta(\tilde Y(s),  V(s))^2 ds\right\}\right]\right),\\
\label{eta-infinite}
\ea
with the corresponding dynamics
\ban
d\tilde Y(s)&=&\left(b (\tilde Y(s),V(s))+\frac{\gamma}{1-\gamma}\theta (\tilde Y(s),V(s))\sigma_F (\tilde Y(s))\right)ds+\sigma_F(\tilde Y(s),V(s))dW(s),\\
&&\;s\geq t,\; \tilde Y(t)=y,\\
dV(s)&=&\left(h(Y(s))-\lambda V(s)\right)ds,\;s\geq t,\; \quad V(t)=v.
\ean 
The optimal strategy for the optimization problem with the infinite delay time is given by
\be\label{optimal-multi-sol-infinite}
\hat\pi(t,y,v)=\frac{\mu(y,v)-r(y,v)}{(1-\gamma)\sigma^2(y,v)}+\frac{\sigma_F(y,v)\partial_y\eta(t,y,v)}{(1-\gamma)\sigma(y,v)  }.
\en
\end{prop}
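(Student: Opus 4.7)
The plan is to exploit that, for $\delta=\infty$ with no pointwise delay factor, the joint state $(Y(t),V(t))$ is a genuine Markov process, so I can seek a Markovian solution of the BSDE \eqref{BSDE-multi-1} of the form $\hat p(t)=\eta(t,Y(t),V(t))$ for some smooth $\eta$. With $n=m=N=1$ one has $\sigma^{*}(\sigma\sigma^{*})^{-1}\sigma=1$ and $I_{n\times n}+\frac{\gamma}{1-\gamma}\sigma^{*}(\sigma\sigma^{*})^{-1}\sigma=\frac{1}{1-\gamma}$, so the BSDE collapses to the scalar quadratic BSDE
\[
d\hat p(t)=\Big\{-\gamma r-\frac{\gamma}{2(1-\gamma)}\theta^{2}-\frac{1}{2(1-\gamma)}\hat q^{2}-\frac{\gamma}{1-\gamma}\theta\hat q\Big\}dt+\hat q\,dW(t), \qquad \hat p(T)=0.
\]

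Applying It\^o's formula to $\eta(t,Y(t),V(t))$ and using $dV=(h(Y)-\lambda V)dt$, matching the diffusion coefficient forces $\hat q(t)=\sigma_F(Y(t),V(t))\,\partial_y\eta(t,Y(t),V(t))$, while matching drifts yields the semilinear parabolic PDE
\[
\partial_t\eta+b\,\partial_y\eta+(h-\lambda v)\partial_v\eta+\frac{1}{2}\sigma_F^{2}\partial_{yy}\eta+\frac{1}{2(1-\gamma)}\sigma_F^{2}(\partial_y\eta)^{2}+\frac{\gamma}{1-\gamma}\theta\sigma_F\partial_y\eta+\gamma r+\frac{\gamma}{2(1-\gamma)}\theta^{2}=0,
\]
with $\eta(T,y,v)=0$. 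The decisive step is to linearize via the Cole--Hopf substitution $\phi(t,y,v):=\exp\bigl(\eta(t,y,v)/(1-\gamma)\bigr)$: a direct computation shows the $(\partial_y\phi)^{2}$ terms cancel exactly, and $\phi$ solves the linear backward PDE
\[
\partial_t\phi+\Big(b+\frac{\gamma}{1-\gamma}\theta\sigma_F\Big)\partial_y\phi+(h-\lambda v)\partial_v\phi+\frac{1}{2}\sigma_F^{2}\partial_{yy}\phi+\Big(\frac{\gamma}{1-\gamma}r+\frac{\gamma}{2(1-\gamma)^{2}}\theta^{2}\Big)\phi=0,
\]
with $\phi(T,y,v)=1$.

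By Feynman--Kac, or equivalently by a Girsanov change of measure that absorbs the extra drift $\frac{\gamma}{1-\gamma}\theta\sigma_F$ into $Y$ to yield the shifted process $\tilde Y$ with the dynamics listed in the statement, $\phi$ admits the stochastic representation exponentiating the running cost $\frac{\gamma}{1-\gamma}r+\frac{\gamma}{2(1-\gamma)^{2}}\theta^{2}$. Taking logarithms and multiplying by $(1-\gamma)$ recovers \eqref{eta-infinite}. The optimal strategy \eqref{optimal-multi-sol-infinite} then follows immediately by inserting $\hat q=\sigma_F\partial_y\eta$ into \eqref{optimal-thm1} and using $(\sigma\sigma^{*})^{-1}\sigma=1/\sigma$ in the scalar case.

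The anticipated technical obstacle is the regularity check needed to identify the Markovian candidate with the unique BSDE solution supplied by Theorem \ref{thm:FBSDE}: one must show that $\hat q=\sigma_F\partial_y\eta$ generates a BMO martingale, which reduces to showing that $\partial_y\eta$ is bounded. Under (A1)--(A4) the coefficients $b,\sigma_F,\theta,r$ are smooth with bounded derivatives and $r,\theta$ are bounded, so standard interior parabolic regularity applied to the linear equation for $\phi$, combined with the fact that the Feynman--Kac representation keeps $\phi$ bounded and bounded away from zero on $[0,T]$, delivers the required Lipschitz bound on $\eta$ and closes the argument.
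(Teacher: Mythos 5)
Your proposal is correct and follows essentially the same route as the paper: the Markovian ansatz $\hat p(t)=\eta(t,Y(t),V(t))$, identification of $\hat q=\sigma_F\partial_y\eta$ by matching the martingale part, the semilinear PDE for $\eta$, the exponential (Cole--Hopf) substitution $\tilde\eta=e^{\eta/(1-\gamma)}$ to linearize, and the Feynman--Kac representation under the drift-shifted dynamics $\tilde Y$. Your closing remarks on verifying that $\partial_y\eta$ is bounded so that $\hat q$ yields a BMO martingale correspond to the paper's appeal to Krylov's regularity theorem and its martingale condition, so no substantive difference remains.
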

\begin{proof}
We make an ansatz written as  
\be
\hat p(t)=\eta(t,Y(t),V(t)). 
\en
Applying It\^o formula to $\hat p(t)$, we get 
\ba
\nonumber d\hat p(t)&=&\bigg\{\partial_t\eta(t,Y(t),V(t))+b(Y(t),V(t))\partial_y\eta(t,Y(t),V(t))\\
\nonumber &&+\frac{1}{2}\sigma_F(Y(t),V(t)) ^2\partial_{yy}\eta(t,Y(t),V(t))\bigg\}dt\\
\nonumber&&+(h(Y(t)) -\lambda V(t))\partial_v\eta(t,Y(t),V(t))dt\\
&&+\sigma_F(Y(t)) \partial_y\eta(t,Y(t),V(t))dW(t).\label{BSDE-special-inf-delay}
\ea
We then compare the BSDE \eqref{BSDE-multi-1} written as 
\ba
\nonumber d\hat p(t)
 &=&\bigg\{-\gamma r(Y(t),V(t))-\frac{1}{2}\frac{\gamma}{1-\gamma}\theta(Y(t),V(t))^2-\frac{1}{2}\frac{1}{1-\gamma}\hat q(t)^2\\
 &&  -\frac{\gamma}{1-\gamma}\theta(Y(t),V(t))\hat q(t)\bigg\}dt+\hat q(t)dW(t) 
\ea
to identify
\be 
\hat q(t) = \sigma_F (y,v) \partial_y\eta(t,Y(t),V(t)).
\en
 We also obtain the equation for $\eta(t, y, v)$ written as
\ba
\nonumber &&\partial_t\eta(t,y,v)+\frac{1}{2}\sigma_F(y,v) ^2\partial_{yy}\eta(t,y,v)+\left(b(y,v)+\frac{\gamma}{1-\gamma}\theta(y,v)\sigma_F(y,v)\right)\partial_y\eta(t,y,v)\\
\nonumber&&+(h(y)-\lambda v)\partial_v\eta(t,y,v)+ \frac{1}{2}\frac{1}{1-\gamma}\sigma_F(y,v)^2(\partial_y\eta(t,y,v))^2\\
&&+\gamma r(y,v)+\frac{1}{2}\frac{\gamma}{1-\gamma}\theta(y,v)^2=0,
\label{eta-PDE}
 \ea
with the terminal condition $\eta(T,y,v)=0$. Let $\tilde\eta =e^{\frac{\eta }{1-\gamma}}$. Hence, \eqref{eta-PDE} can be rewritten as 
\ba
\nonumber &&\partial_t\tilde\eta(t,y,v)+\frac{1}{2}\sigma_F(y,v)^2\partial_{yy}\tilde\eta(t,y,v)+\left(b(y,v)+\frac{\gamma}{1-\gamma}\theta(y,v)\sigma_F(y,v)\right)\partial_y\tilde\eta(t,y,v)\\
  &&+(h(y)-\lambda v)\partial_v\tilde\eta(t,y,v)+\left(\frac{\gamma}{1-\gamma} r(y,v)+\frac{1}{2}\frac{\gamma}{(1-\gamma)^2}\theta(y,v)^2\right)\tilde\eta(t,y,v)=0,\label{tildeeta-PDE}
\ea
with the terminal condition $\tilde\eta(T,y,v)=1$. Using \eqref{tildeeta-PDE} and Ito's formula, we have the corresponding dynamics given by
\ban
&&de^{\int_0^t\left(\frac{\gamma}{1-\gamma} r(u)+\frac{1}{2}\frac{\gamma}{(1-\gamma)^2}\theta(u)^2\right)du}\tilde\eta(t,\tilde Y(t),V(t))\\
&=&e^{\int_0^t\left(\frac{\gamma}{1-\gamma} r(u)+\frac{1}{2}\frac{\gamma}{(1-\gamma)^2}\theta(u)^2\right)du}\sigma_F(\tilde Y(t),V(t))\partial_y\tilde\eta(t,\tilde Y(t), V(t))dW(t),
\ean
where $\tilde Y(s)$ and $V(s)$ satisfy
\ban
d\tilde Y(s)&=&\left(b (\tilde Y(s),V(s))+\frac{\gamma}{1-\gamma}\theta (\tilde Y(s),V(t))\sigma_F (\tilde Y(s),V(t))\right)ds+\sigma_F(\tilde Y(s),V(s))dW(s),\\
&&\;s\geq t,\; \tilde Y(t)=y,\\
dV(s)&=&\left(h(Y(s))-\lambda V(s)\right)ds,\;s\geq t,\; \quad V(t)=v.
\ean 
The candidate of the solution for $\tilde\eta$ is written as
\[
\tilde\eta(t,y,v)=\EE_{t,y,v}\left[\exp\left\{\int_t^T\frac{\gamma}{1-\gamma} r(\tilde Y(s),  V(s))+\frac{1}{2}\frac{\gamma}{(1-\gamma)^2}\theta(\tilde Y(s),  V(s))^2 ds\right\}\right].
\]
Using the proposed regularity conditions (A1-A4) and applying Theorem 2.9.10 in \cite{Krylov1980}, we obtain $\tilde\eta(s, y, v)$ is first order continuous differentiable with respect to $t$,  second order continuous differentiable with respect to $y$ and  first order continuous differentiable with respect to $v$.
 Moreover,
\be\label{martingale-cond-infinite}
\int_t^Te^{\int_t^s\left(\frac{\gamma}{1-\gamma} r(u)+\frac{1}{2}\frac{\gamma}{(1-\gamma)}\theta(u)^2\right)du}\sigma_F( \tilde Y(s),V(s) )\partial_y\tilde\eta(s,\tilde Y(s), V(s))dW(s)
\en
is a martingale to guarantee that \eqref{eta-infinite} is the solution for $\eta$. The optimal strategy is given by $\hat{\pi}(t, Y(t), V(t))$ defined in \eqref{optimal-multi-sol-infinite}.  
\end{proof}


\subsubsection*{Financial Implications}
Based on  \eqref{optimal-multi-sol-infinite}, we observe that the first term is proportional to the Sharpe ratio and the second term is driven by the ratio of $\sigma_F$ and $\sigma$. Hence, in the case of $\partial_y\eta >0$, the investment for risky assets increases in the factor volatility $\sigma_F$. For example, in the domestic stock market, if we treat the factors as foreign stocks, the larger volatility of foreign stocks implies the optimal proportion of the wealth in risky assets becomes larger. 

We now study one particular case of the infinite delay time, assuming $\sigma_F$ being a constant and 
\ba
\nonumber b(t)+\frac{\gamma}{1-\gamma}\theta(t) \sigma_F &=&\alpha_1Y(t)+\alpha_2V(t) ,\\
\nonumber\gamma r(t)+\frac{1}{2}\frac{\gamma}{1-\gamma}\theta(t)^2&=&\beta_1Y(t)^2+\beta_2V(t)^2 ,\\
h(Y(t))&=&Y(t). \label{sufficient-delay}
\ea 
with the parameter constraint $\beta_1<0$. We make an ansatz for $\eta$ written as 
\[
\eta(t,y,v)= \frac{1}{2}\psi_1(t)y^2+\frac{1}{2}\psi_2(t)v^2+\psi_3(t)yv+ \psi_4(t),
\]
where $\psi_i(t)$ for $i=1,\cdots,4$ are deterministic functions driven by $t$ satisfying
\ba
\label{psi1-delay}\dot\psi_1(t)&=&-2\alpha_1\psi_1(t)-\frac{\sigma_F^2}{1-\gamma}\psi_1(t)^2-2\psi_3(t)-2\beta_1,\\
\dot\psi_2(t)&=&2\lambda\psi_2(t)-2\alpha_2\psi_3(t)-\frac{\sigma_F^2}{1-\gamma}\psi_3(t)^2-2\beta_2,\\
\dot\psi_3(t)&=&(-\alpha_1+\lambda)\psi_3(t)-\frac{\sigma_F^2}{1-\gamma}\psi_1(t)\psi_3(t)-\alpha_2\psi_1(t)-\psi_2(t),\\
\dot\psi_4(t)&=&-\frac{1}{2}\sigma_F\psi_1(t),\label{psi4-delay}
\ea
with the terminal conditions $\psi_i(T)=0$ for $i=1,\cdots,4$ by identifying the coefficients of $y^2$, $v^2$, $yv$, and the zero order term. The existence and uniqueness of the coupled ordinary differential equations can be verified based on the condition $\beta_1<0$.  The corresponding first order derivative with respect to $y$ is given by
\[
\partial_y\eta(t,y,v)=\psi_1(t)y+\psi_3(t)v.
\]
the corresponding optimal strategy is given by $\hat{\pi}(t, Y(t), V(t))$ written as
\be 
\hat\pi(t,y,v)=\frac{\mu(y,v)-r(y,v)}{(1-\gamma)\sigma^2(y,v)}+\frac{\sigma_F (\psi_1(t)y+\psi_3(t)v)}{(1-\gamma)\sigma(y,v)  }.
\en

 In the case of $\beta_1<0$ and $\alpha_1>0,\; \alpha_2>0,\; \beta_2<0$, $\psi_1(t)$ and $\psi_3(t)$ stay in negative for $0\leq t\leq T$. See Figure \ref{psi-pointwise-delay} for instance. Therefore, in the case of $y<0$ and $v<0$, investors intend to invest risky assets rather than risk-free assets based on the larger proportion of the wealth to hold risky assets since $\partial_y\eta(t,y,v) >0$ for $0\leq t\leq T$ leading to
\[
\frac{\sigma_F \partial_y\eta(t,y,v)}{(1-\gamma)\sigma(y,v)}>0.
\]
However, given $y>0$ and $v>0$, the proportion of the wealth to hold risky assets $ \hat{\pi}(t, Y(t), V(t), Z(t))$ becomes smaller owing to $\partial_y\eta(t,y,v) <0$ implying
\[
\frac{\sigma_F \partial_y\eta(t,y,v)}{(1-\gamma)\sigma(y,v)}<0.
\]

 \begin{figure}[htbp]
\begin{center}
\includegraphics[width=12cm,height=5cm]{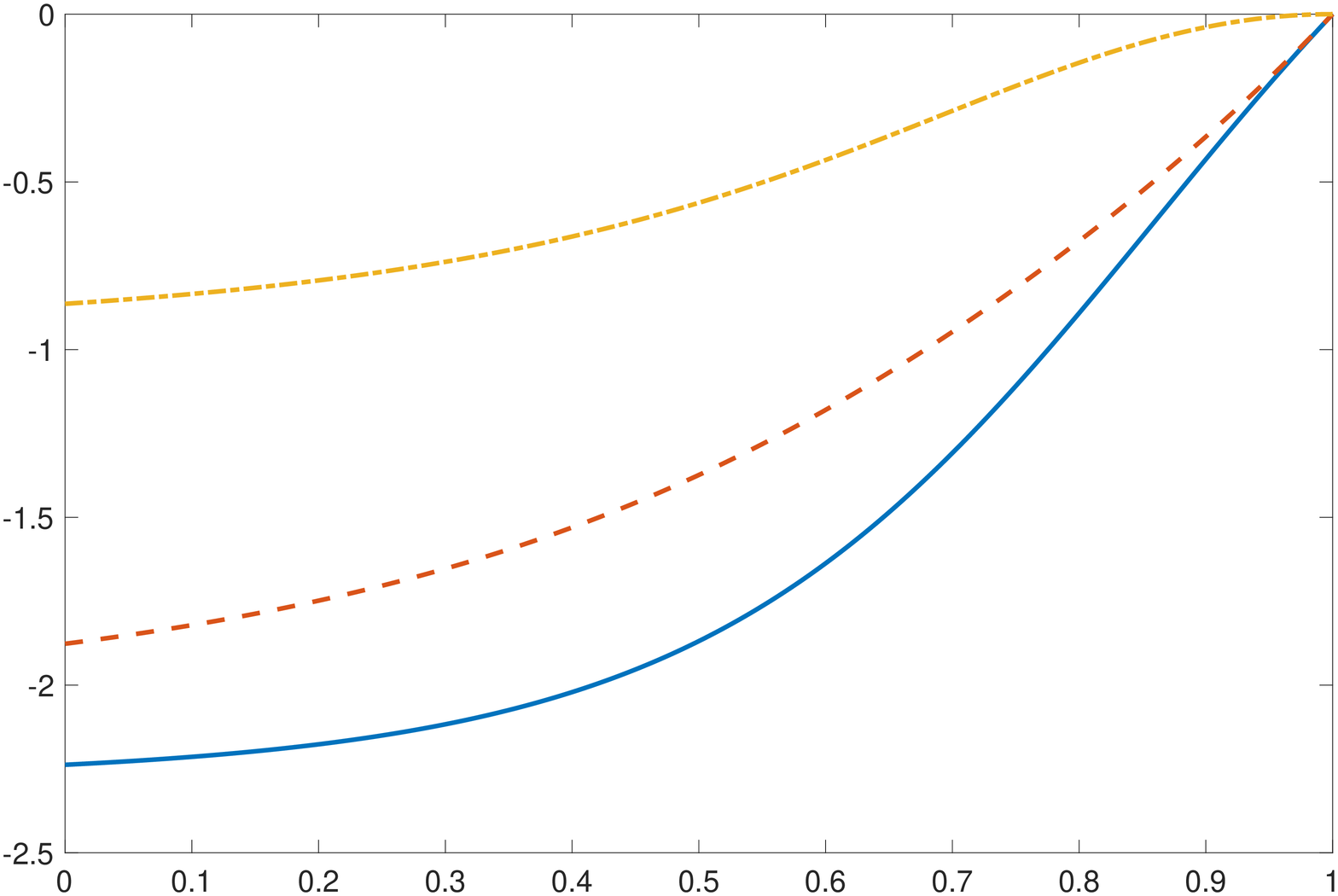}
\caption{Plot of $\psi_1$ (solid line), $\psi_2$ (dash line), and $\psi_3$ (dash-dot line) for the case of the infinite delay time with parameters $\alpha_1=1$, $\alpha_2=1$, $\beta_1=-2$, $\beta_2=-2$,  $\lambda=1$, $\sigma_F=1$, $\gamma=0.5$ and the terminal time $T=1$. }
\label{psi-pointwise-delay}
\end{center}
\end{figure}

\subsection{Pointwise Delay}
 Based on \cite{Pang2011} and \cite{Pang2015}, we assume $N=1$, $\sigma_F$ being a constant, and 
 \ba
\nonumber b(t)+\frac{\gamma}{1-\gamma}\theta(t)\sigma_F&=&\alpha_1Y(t)+\alpha_2V(t)+\alpha_3Z(t),\\
\nonumber \gamma r(t)+\frac{1}{2}\frac{\gamma}{(1-\gamma)}\theta(t)^2&=&\beta_1Y(t)+\beta_2V(t)+\beta_3Z(t),\\
h(Y(t))&=&Y(t),\label{assumption-special-1}
\ea
and $\alpha_i$ and $\beta_i$ for $i=1,\cdots,3$ satisfy 
\be\label{suff-cond-pw-delay}
\alpha_1+e^{\lambda\delta}\alpha_3=-\alpha_1\frac{\beta_3}{\alpha_3}+\beta_1=1, \quad \alpha_2-\lambda e^{\lambda\delta}\alpha_3=-\alpha_2\frac{\beta_3}{\alpha_3}+\beta_2=\alpha_3e^{\lambda \delta}.
\en
The optimal strategy is obtained  as follows.

\begin{prop}
Given the assumption \eqref{assumption-special-1}, the solution for the  $(\hat p,\hat q)$ is given by
\ba
\nonumber\hat p(t)&=&\left(e^{T-t}-1\right)\left(Y(t)+e^{\lambda\delta}\alpha_3V(t)\right)-\frac{\beta_3}{\alpha_3}Y(t)\\
\nonumber &&+\frac{\sigma_F^2}{2(1-\gamma)}\left\{\frac{1}{2}(e^{2(T-t)}-1)+2\left(1+\frac{\beta_3}{\alpha_3}\right)(1-e^{T-t})+\left(1+\frac{\beta_3}{\alpha_3}\right)^2(T-r)\right\},\\
 \label{hatp-delay}\\
\label{hatq-delay}\hat q(t)&=&\sigma_F\left(e^{T-t}-1-\frac{\beta_3}{\alpha_3}\right).
\ea
The optimal strategy for the optimization problem is $ \hat{\pi}(t, Y(t), V(t), Z(t))$ given by
\be\label{optimal-pointwise}
\hat\pi(t,y,v,z)=\frac{\mu(y,v,z)-r(y,v,z)}{(1-\gamma)\sigma^2(y,v,z)}+\frac{\sigma_F (e^{T-t}-1-\frac{\beta_3}{\alpha_3}) }{(1-\gamma)\sigma(y,v,z)  }. 
\en

\end{prop}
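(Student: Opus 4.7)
The plan is to seek $(\hat p(t), \hat q(t))$ via the linear ansatz
$$\hat p(t) = A(t)\,Y(t) + B(t)\,V(t) + D(t),$$
which is natural given the affine structure of the coefficients in \eqref{assumption-special-1}. Note the absence of an explicit $Z(t)$ term: since $Z(t)=Y(t-\delta)$ appears only in the drift (and its pathwise value is determined by $Y$), the $Z$-dependence will be absorbed by matching coefficients. Applying It\^o's formula to the ansatz identifies $\hat q(t) = \sigma_F\, A(t)$ from the Brownian component, and the drift will be matched against the scalar version ($N=1$) of the backward equation \eqref{BSDE-multi-1}, in which the factor $I_{n\times n}+\frac{\gamma}{1-\gamma}\sigma^*(\sigma\sigma^*)^{-1}\sigma$ collapses to $\frac{1}{1-\gamma}$.

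First, I would combine the term $A(t)\,b(t)$ appearing in the drift of $d\hat p(t)$ with the cross term $\frac{\gamma}{1-\gamma}\theta(t)\hat q(t)$ of the BSDE, using the linearity assumption $b(t)+\frac{\gamma}{1-\gamma}\theta(t)\sigma_F = \alpha_1 Y+\alpha_2 V+\alpha_3 Z$. Combined with the second linearity assumption on $\gamma r+\frac{1}{2}\frac{\gamma}{1-\gamma}\theta^2$, matching coefficients of $Y$, $V$, $Z$ and the purely deterministic part in the drift produces three relations involving $A,B,\dot A,\dot B$, plus the ODE $\dot D = -\sigma_F^2 A^2/(2(1-\gamma))$. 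Because $Z(t)=Y(t-\delta)$ contributes no infinitesimal $dt$ or $dW(t)$ increment to the ansatz, the $Z$-coefficient equation is \emph{algebraic}, yielding $B(t) = e^{\lambda\delta}\bigl(\alpha_3 A(t)+\beta_3\bigr)$.

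Next, I would differentiate this algebraic relation and substitute into the $Y$- and $V$-coefficient equations. Each gives a scalar first-order linear ODE for $A(t)$, and the sufficient conditions \eqref{suff-cond-pw-delay} are precisely what is needed for these two ODEs to coincide. The common ODE is then solved in closed form; back-substitution determines $B(t)$, and integrating $\dot D = -\sigma_F^2 A^2/(2(1-\gamma))$ backwards from $T$ produces the $\sigma_F^2/(2(1-\gamma))$ block appearing in \eqref{hatp-delay}. Setting $\hat q(t)=\sigma_F A(t)$ yields \eqref{hatq-delay}, and inserting into the scalar form of \eqref{optimal-thm1} gives \eqref{optimal-pointwise}. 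Optimality is then inherited from the verification Theorem \ref{thm:super-mart}.

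The main obstacle is the consistency of the overdetermined system: the pointwise delay reduces one of the four coefficient-matching conditions from differential to purely algebraic, and this algebraic relation must be compatible with the two ODEs coming from the $Y$- and $V$-coefficients. Checking that \eqref{suff-cond-pw-delay} is exactly the compatibility condition, so that both ODEs collapse to a single scalar linear equation, is the non-routine algebraic core of the proof; without those conditions, no ansatz linear in $(Y,V,Z)$ with deterministic time-dependent coefficients can solve the quadratic BSDE \eqref{BSDE-multi-1}.
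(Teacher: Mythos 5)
Your proposal follows essentially the same route as the paper: an ansatz for $\hat p(t)$ that is linear in $(Y(t),V(t))$ with deterministic time-dependent coefficients, identification of $\hat q(t)=\sigma_F\,\partial_y\eta$ from the martingale part, and coefficient matching in $y$, $v$, $z$ and the constant term, with the $z$-coefficient giving the algebraic relation and \eqref{suff-cond-pw-delay} serving as the compatibility condition that collapses everything to the single ODE $\dot Q=-Q-1$ plus the quadrature for the deterministic part. The only cosmetic difference is that the paper bakes the algebraic $z$-relation into its parametrization $\eta=Q(t)(y+e^{\lambda\delta}\alpha_3 v)-\tfrac{\beta_3}{\alpha_3}y+\psi(t)$, whereas you derive it as one of the matching conditions; the content is identical.
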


\begin{proof}
Based on the assumption \eqref{assumption-special-1}, assuming $\eta(t,Y(t), V(t))$ independent of $Z(t)$, the ansatz for $\hat p(t)$ is given by 
\[
\hat p(t)=\eta(t,Y(t),V(t)).
\]
Hence, applying It\^o formula to $\hat p(t)$, we get 
\ba
\nonumber d\hat p(t)&=&\bigg\{\partial_t\eta(t,Y(t),V(t))+b(t)\partial_y\eta(t,Y(t),V(t))+\frac{1}{2}\sigma_F  ^2\partial_{yy}\eta(t,Y(t),V(t))\bigg\}dt\\
\nonumber&&+(Y(t)-e^{-\lambda\delta}Z(t)-\lambda V(t))\partial_v\eta(t,Y(t),V(t))dt\\
&&+\sigma_F \partial_y\eta(t,Y(t),V(t))dW(t).\label{BSDE-special-1}
\ea
Identifying \eqref{BSDE-special-1} and \eqref{BSDE-multi-1} implies 
\be\label{optimal_q-special}
\hat q(t) = \sigma_F  \partial_y\eta(t,Y(t),V(t)),
\en
and $\eta$ satisfies 
\ban
\nonumber &&\partial_t\eta(t,y,v)+\frac{1}{2}\sigma_F ^2\partial_{yy}\eta(t,y,v)+(\alpha_1y+\alpha_2v+\alpha_3z)\partial_y\eta(t,y,v)\\
\nonumber&&+(y-e^{\lambda\delta}z-\lambda v)\partial_v\eta(t,y,v)+\frac{1}{2}\frac{1}{1-\gamma}\sigma_F  ^2(\partial_y\eta(t,y,v))^2+\beta_1y+\beta_2v+\beta_3z=0,
 \ean
 with the terminal condition $\eta(T,y,v)=0$.
Consequently, 
\ban
\nonumber &&\partial_t\eta(t,y,v)+  (\alpha_1\partial_y\eta+\partial_v\eta+\beta_1)y+(\alpha_2\partial_y\eta-\lambda\partial_v\eta+\beta_2)v\\
&&+(\alpha_3\partial_y\eta-e^{-\lambda\delta}\partial_v\eta+\beta_3)z+\frac{1}{2}\frac{1}{1-\gamma}\sigma_F  ^2(\partial_y\eta(t,y,v))^2+\frac{1}{2}\sigma_F ^2\partial_{yy}\eta(t,y,v)=0,
 \ean
with the terminal condition $ \eta(T,y,v)=0$. We obtain the solution for $\eta(t,y,v)$ given by
\be
\eta(t,y,v)=Q(t)\left(y+e^{\lambda\delta}\alpha_3v\right)-\frac{\beta_3}{\alpha_3}y+\psi(t)
\en
where $Q(t)$ and $\psi(t)$ are deterministic functions of $t$ satisfying 
\be
\dot Q(t)=- Q(t)-1,\quad \dot\psi(t)=-\frac{\sigma_F^2}{2(1-\gamma)} \left(Q(t)-\frac{\beta_3}{\alpha_3}\right)^2,
\en 
with the terminal condition $Q(T)=0$ and $\psi(T)=0$ leading to the solutions given by
\ban
Q(t)&=&e^{T-t}-1,\\
 \psi(t)&=&\frac{\sigma_F^2}{2(1-\gamma)}\left\{\frac{1}{2}(e^{2(T-t)}-1)+2\left(1+\frac{\beta_3}{\alpha_3}\right)(1-e^{T-t})+\left(1+\frac{\beta_3}{\alpha_3}\right)^2(T-r)\right\}.
\ean
We then obtain the solution for  (\ref{hatp-delay}-\ref{hatq-delay}). Here we use the condition (\ref{suff-cond-pw-delay}). This leads to the optimal portfolio given by $\hat{\pi}(t, Y(t), V(t), Z(t))$ in (\ref{optimal-pointwise}). The last result follows from (\ref{optimal-thm1}) and the expression (\ref{optimal-pointwise}).

\end{proof}
 
\subsubsection*{Financial Implications}
Given the optimal strategy of the pointwise delay written as \eqref{optimal-pointwise}, the proportion for the risky assets grows linearly in the factor volatility if  
$e^{T-t}-1-\frac{\beta_3}{\alpha_3}>0$ depending only on $\alpha_3$ and $\beta_3$ being the coefficient of the pointwise delay term. Namely, assuming $t=0$, when the terminal time $T$ is large enough given by $T>\log (1+\frac{\beta_3}{\alpha_3})$, the investors prefer risky assets in the case of the factor volatility becoming larger. Obviously,   $ \frac{\beta_3}{\alpha_3}<0$ implies $\partial_y\eta>0$ for all $T>0$. In this case, the proportion of investment in risky assets increases in the factor volatility $\sigma_F$.

\section{Conclusions}\label{conclusion}
In this paper, we consider the portfolio optimization problem with the delay factor model based on the power utility. Due to the non-Markovian feature, we apply the stochastic maximum principle and the FBSDE approach to characterize the optimal strategy and prove the existence and uniqueness for the corresponding FBSDEs. We further propose a different approach using martingale method to solve the optimization problem in the complete market, and investigate its connection with the FBSDE approach. In addition, two explicitly solvable cases are also discussed. The proposed problem in this paper can be extended toward many directions including the general utility and insurance analysis. \\

 From the financial point of view, 
it would be of great interest to study the portfolio optimization problem through maximizing the exponential utility or the general utility where the assets are driven by the delay factor model. In addition, the one player optimization problem can be extended to the multiple players optimization problem with game feature using the relative performance proposed by \cite{Touzi2015}. From the mathematical point of view, under (non)-Markovian structure, the discussion of the existence and uniqueness of the corresponding coupled quadratic FBSDEs is also needed. In the Markovian cases, the solvability of HJB equations is also worth to study. We will pursue to address these problems in the future work.

 \appendix
 \numberwithin{equation}{section}

\makeatletter 
\newcommand{\section@cntformat}{Appendix \thesection:\ }
\makeatother

 \section{Proof of Theorem \ref{thm:FBSDE}} \label{Proof-FBSDE}
The quadratic FBSDE is
\begin{align*}
\left\{ \begin{array}{ll} &d\tilde X^{\hat\pi}(t)=\frac{\gamma}{1-\gamma}\tilde X^{\hat\pi}(t)\left\{ (1-\gamma)r(t)+\frac{1}{2}\left|\theta(t)\right|^2+\frac{1}{2}\hat{q}(t)^*\sigma(t)^*\left[\sigma(t)\sigma(t)^*\right]^{-1} \sigma(t)\hat{q}(t)\right\}dt\\ 
&\qquad \qquad +\frac{\gamma}{1-\gamma}\tilde X^{\hat\pi}(t)\left\{\theta(t)^*+\hat{q}(t)^*\sigma(t)^*\left[\sigma(t)\sigma(t)^*\right]^{-1} \sigma(t)\right\}dW(t)\ ,\\[2mm]
&d\hat{p}(t)=\bigg\{-\gamma r(t)-\frac{1}{2}\frac{\gamma}{1-\gamma}\left|\theta(t)\right|^2-\frac{1}{2} \hat{q}(t)^*(I_{n\times n}+\frac{\gamma}{1-\gamma}\sigma(t)^*\left[\sigma(t)\sigma(t)^*\right]^{-1} \sigma(t))\hat{q}(t)\\
&\qquad \qquad \qquad -\frac{\gamma}{1-\gamma}\theta(t)^*\hat{q}(t) \bigg\}dt +\hat{q}(t)^*dW(t)\ ,\\[2mm]
 &\hat{X}(0)= U(x),\quad \hat{p}(T)=0\ .
 \end{array} \right. 
\end{align*}
Define  the following notations
\begin{align*}
&f(t,\hat{q}(t)):=\gamma r(t)+\frac{1}{2}\frac{\gamma}{1-\gamma}\left|\theta(t)\right|^2\\
&+\frac{1}{2}\hat{q}(t)^*\bigg(I_{n\times n}+\frac{\gamma}{1-\gamma}\sigma(t)^*\left[\sigma(t)\sigma(t)^*\right]^{-1} \sigma(t)\bigg)\hat{q}(t)+\frac{\gamma}{1-\gamma}\theta(t)^*\hat{q}(t)\ .
\end{align*}
Then we can write the dynamics of $\hat{p}(t)$ as follows
\begin{align}\label{QBSDE}
\left\{ \begin{array}{ll}
&d\hat{p}(t)=-f(t,\hat{q}(t))dt +\hat{q}(t)^*dW(t)\ ,\\[2mm]
& \hat{p}(T)=0\ .
\end{array} \right. 
\end{align}
Denote $$\tilde\gamma=\frac{\gamma}{1-\gamma}.$$ Consider the probability measure
\begin{align*}
d\tilde{\mathbb{P}}=\mathcal{E}_T(\tilde\gamma\theta^*\cdot W)d\mathbb{P}
\end{align*}
where 
\be\label{cal-E}
\mathcal{E}_T(\tilde\gamma\theta^*\cdot W)=\exp\left(\int_0^T\tilde\gamma\theta(t)^*dW(t)-\frac{1}{2}\int_0^T\|\tilde\gamma\theta(t)^*\|^2dt\right)
\en
is the Dol$\acute{\text{e}}$ans-Dade exponential martingale and 
\begin{align*}
	\tilde{W}(t)=W(t)-\int_0^t\tilde\gamma\theta(s)^*ds\ ,
\end{align*}
is a $\tilde{\mathbb{P}}$-Brownnian motion. Then \eqref{QBSDE} is equivalent to
\begin{align}\label{QBSDE_Girsanov}
\left\{ \begin{array}{ll}
&d\hat{p}(t)=-\tilde{f}(t,\hat{q}(t))dt +\hat{q}(t)^*d\tilde{W}(t)\ ,\\[2mm]
& \hat{p}(T)=0\ ,
\end{array} \right. 
\end{align}
where
\begin{align*}
\tilde{f}(t,\hat{q}(t))=&\gamma r(t)+\frac{1}{2} \tilde\gamma\left|\theta(t)\right|^2+\frac{1}{2}\hat{q}(t)^*\bigg(I_{n\times n}+\tilde\gamma\sigma(t)^*\left[\sigma(t)\sigma(t)^*\right]^{-1} \sigma(t)\bigg)\hat{q}(t)\\
=&\gamma r(t)+\frac{1}{2}\tilde\gamma\left|\theta(t)\right|^2+\frac{1}{2}\hat{q}(t)^*\bigg(I_{n\times n}+\tilde\gamma\tilde{\sigma}(t)\bigg)\hat{q}(t)
\end{align*}
 and
$$\tilde{\sigma}(t):=\sigma(t)^*\left[\sigma(t)\sigma(t)^*\right]^{-1} \sigma(t)\ .$$
Showing the system \eqref{QBSDE} has a unique solution is equivalent to showing \eqref{QBSDE_Girsanov} has a unique solution. Note that \eqref{QBSDE_Girsanov} is equivalent to the following system:
	\begin{align}\label{QBSDE_xi}
	\left\{ \begin{array}{ll}
	&d\tilde{p}(t)=-(\tilde{f}(t,\hat{q}(t))-\tilde{f}(t,0))dt +\hat{q}(t)^*d\tilde{W}(t)\\[3mm]
	&\qquad \ =-\frac{1}{2}\hat{q}(t)^*\bigg(I_{n\times n}+\tilde\gamma\sigma(t)^*\left[\sigma(t)\sigma(t)^*\right]^{-1} \sigma(t)\bigg)\hat{q}(t)dt+\hat{q}(t)^*d\tilde{W}(t)\ ,\\[2mm]
	& \tilde{p}(T)=\xi\ .
	\end{array} \right. 
	\end{align}	
	where $$
\tilde{p}(t)=\hat{p}(t)+\int_0^t \tilde{f}(s, 0) ds,
$$
and
$$\xi:=\int_0^T\tilde{f}(s,0)ds=\int_0^T(\gamma r(s)+\frac{1}{2}\tilde\gamma\theta(s)^*\theta(s))ds\ . $$
We prove that \eqref{QBSDE_xi} has a unique solution for bounded terminal condition $\|\xi\|_{\infty}<\infty$.

\subsection{Local Solution}
The first result states that when $\|\xi\|_{\infty}$ is small, the system \eqref{QBSDE_xi} has a unique solution.
\begin{prop}\label{prop:local}
Let $\beta^2:= 4\cdot \|I_{n\times n}-\tilde{\sigma}\|^2_{\infty}$. Suppose $\|\xi\|_{\infty}<\frac{1}{4\beta}$, there exists a unique solution $(\tilde{p},\hat{q})$ for the system \eqref{QBSDE_xi}  with the norm $\|\tilde{p}\|_{\infty}^2+ \|\hat{q}\cdot \tilde{W}\|^2_{BMO_T(\tilde{\mathbb{P}})}\leq \frac{1}{2\beta^2}$.	
\end{prop}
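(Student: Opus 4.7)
The plan is to apply the contraction-mapping strategy of \cite{tevzadze2008solvability} in the product space $\mathcal{B} := S_\infty([0,T]\times\Omega;\mathbb{R}) \times BMO_T(\tilde{\mathbb{P}})$ equipped with the norm
\begin{equation*}
\|(\tilde{p},\hat{q})\|_{\mathcal{B}}^2 := \|\tilde{p}\|_{T,\infty}^2 + \|\hat{q}\cdot\tilde{W}\|^2_{BMO_T(\tilde{\mathbb{P}})},
\end{equation*}
restricted to the closed ball $B_R := \{(\tilde{p},\hat{q})\in\mathcal{B} : \|(\tilde{p},\hat{q})\|_{\mathcal{B}}^2 \le R^2\}$ with $R^2 := \tfrac{1}{2\beta^2}$. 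For any input $(\tilde{p}',\hat{q}')\in B_R$, I would freeze the quadratic nonlinearity and solve the linear BSDE
\begin{equation*}
d\tilde{p}(t) = -\tfrac{1}{2}\hat{q}'(t)^*\bigl(I_{n\times n} + \tilde\gamma\,\tilde\sigma(t)\bigr)\hat{q}'(t)\,dt + \hat{q}(t)^*\,d\tilde{W}(t),\qquad \tilde{p}(T)=\xi,
\end{equation*}
by writing $\tilde{p}(t) = \tilde{\mathbb{E}}\bigl[\xi + \int_t^T \tfrac12 \hat{q}'(s)^*(I+\tilde\gamma\tilde\sigma(s))\hat{q}'(s)\,ds \,\big|\, \mathcal{F}_t\bigr]$ and extracting $\hat{q}$ from the martingale representation theorem under $\tilde{\mathbb{P}}$. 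The John--Nirenberg inequality for BMO martingales ensures that $\int_0^T|\hat{q}'|^2 ds$ has moments of every order, so the conditional expectation is well defined. This defines a map $\Phi : B_R \to \mathcal{B}$ whose fixed points are the solutions of \eqref{QBSDE_xi}.

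For invariance, taking the $L^\infty$ norm at a stopping time $\tau\in[[0,T]]$ in the representation of $\tilde{p}$ gives
\begin{equation*}
\|\tilde{p}\|_{T,\infty} \le \|\xi\|_\infty + \tfrac{1}{2}\|I+\tilde\gamma\tilde\sigma\|_\infty\,\|\hat{q}'\cdot\tilde{W}\|^2_{BMO_T(\tilde{\mathbb{P}})}.
\end{equation*}
For the BMO bound on $\hat{q}$, apply It\^o's formula to $\tilde{p}^2$ on $[\tau,T]$ and take $\tilde{\mathbb{E}}[\,\cdot\,|\mathcal{F}_\tau]$ to kill the local-martingale piece, obtaining
\begin{equation*}
\tilde{\mathbb{E}}\!\left[\int_\tau^T |\hat{q}(s)|^2\,ds\,\Big|\,\mathcal{F}_\tau\right] = \tilde{\mathbb{E}}[\xi^2|\mathcal{F}_\tau] - \tilde{p}(\tau)^2 + \tilde{\mathbb{E}}\!\left[\int_\tau^T \tilde{p}(s)\,\hat{q}'(s)^*(I+\tilde\gamma\tilde\sigma)\hat{q}'(s)\,ds\,\Big|\,\mathcal{F}_\tau\right].
\end{equation*}
Inserting $\|\xi\|_\infty \le \tfrac{1}{4\beta}$, $(\tilde{p}',\hat{q}')\in B_R$ and $R^2 = \tfrac{1}{2\beta^2}$ into these two inequalities and doing a short arithmetic bookkeeping gives $\Phi(B_R)\subset B_R$. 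For contraction, the difference of two iterates satisfies a BSDE with zero terminal value whose driver is linear in $\hat{q}'_1-\hat{q}'_2$ with coefficient $\hat{q}'_1+\hat{q}'_2$ of BMO norm at most $2R$; the same conditional-expectation machinery, together with the Cauchy--Schwarz--type energy inequality for BMO martingales, then yields
\begin{equation*}
\|\Phi(\tilde{p}'_1,\hat{q}'_1)-\Phi(\tilde{p}'_2,\hat{q}'_2)\|_{\mathcal{B}} \le C\beta R\,\|(\tilde{p}'_1-\tilde{p}'_2,\hat{q}'_1-\hat{q}'_2)\|_{\mathcal{B}},
\end{equation*}
with $C\beta R < 1$ by the choice of $R$. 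The Banach fixed point theorem then supplies the unique fixed point of $\Phi$ in $B_R$.

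The delicate step will be the invariance estimate: the driver is \emph{quadratic} in $\hat{q}'$, so no standard $L^2$-plus-Gronwall argument applies. The BMO norm is precisely the right substitute, since it both produces (through John--Nirenberg) the exponential moments needed to define $\Phi$, and delivers (through the energy inequality) the Lipschitz-type estimates on products $\hat{q}'_1\hat{q}'_2$ of BMO quantities that power the contraction. Tracking the constants carefully enough so that the single smallness assumption $\|\xi\|_\infty < \tfrac{1}{4\beta}$ simultaneously produces invariance of $B_R$ and contractivity of $\Phi$ is where the bulk of the technical work lies.
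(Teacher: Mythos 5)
Your proposal is correct and follows essentially the same route as the paper: a Tevzadze-style contraction on the ball of squared radius $R^2=\tfrac{1}{2\beta^2}$ in $S_\infty\times BMO_T(\tilde{\mathbb{P}})$, with invariance obtained from It\^o's formula applied to $|\tilde{p}|^2$ at stopping times and contractivity from the BMO energy estimate on the difference of frozen quadratic drivers. The only cosmetic difference is that you split the invariance bound into separate $L^\infty$ and BMO estimates where the paper derives both at once from the single $|\tilde{p}|^2$ identity; the constants close in either version.
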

\begin{proof}
	Consider the mapping  $(\tilde{p},\hat{q})=F(\bar{p},\bar{q})$ by the relation
	\begin{align}\label{map:F}
	\left\{ \begin{array}{ll}
	&d\tilde{p}(t)=-(\tilde{f}(t,\bar{q}(t))-\tilde{f}(t,0))dt +\hat{q}(t)^*d\tilde{W}(t)\ ,\\[2mm]
	& \tilde{p}(T)=\xi\ .
	\end{array} \right. 
\end{align}	
Using the Ito's formula for $|\tilde{p}(t)|^2$ we can obtain that	
\begin{align*}
|\tilde{p}(t)|^2=\|\xi\|_{\infty}^2+ 2\int_{t}^T\tilde{p}(s)(\tilde{f}(s,\bar{q}(s))-\tilde{f}(s,0))ds-\int_t^T|\hat{q}(s)|^2ds-2\int_t^T\tilde{p}(s)\hat{q}(s)^*d\tilde{W}(s)\ .
\end{align*}	
Let $\tau$ be a stopping time taking valued in $[0, T]$. Taking conditional expectation $\tilde{\mathbb{E}}[\cdot |\mathcal{F}_{\tau}]$ on both sides of above equation, where $\tilde{\mathbb{E}}[\cdot|\mathcal{F}_{\tau}]$ denotes the conditional expectation $w.r.t.$ the probability measure $\tilde{\mathbb{P}}$ and filtration $\mathcal{F}_{\tau}$, and using the inequality $2ab\leq \frac{1}{4}a^2+4b^2$, we can obtain that
\begin{align*}
&|\tilde{p}(t)|^2+\tilde{\mathbb{E}}\left[\int_{\tau}^T|\hat{q}(s)|^2ds|\mathcal{F}_{\tau}\right]\\
\leq &\|\xi\|^2_{\infty}+ 2\|\tilde{p}\|_{\infty}\cdot  \tilde{\mathbb{E}}\left[\int_{\tau}^T|\tilde{f}(s,\bar{q}(s))-\tilde{f}(s,0)|ds|\mathcal{F}_{\tau}\right]\\
\leq &\|\xi\|^2_{\infty}+\frac{1}{4}\|\tilde{p}\|_{\infty}^2+4\left(\tilde{\mathbb{E}}\left[\int_{\tau}^T|\tilde{f}(s,\bar{q}(s))-\tilde{f}(s,0)|ds|\mathcal{F}_{\tau}\right]\right)^2\ .
\end{align*}	
In the rest of the proof of this Proposition, we use the notation 
$$
\|N\|_{BMO_T}=\|N\|_{BMO_T(\tilde{P})}
$$
for a $\tilde{P}$-martingale $N$. Note that
\begin{align*}
\frac{1}{2}\left(\|\tilde{p}\|^2_{\infty}+\|\hat{q}\cdot \tilde{W}\|_{BMO_T}^2\right)\leq & \max\left(\|\tilde{p}\|^2_{\infty},\|\hat{q}\cdot \tilde{W}\|_{BMO_T}^2\right)\\
\leq &\text{ess}\sup_{[[0,T]]}\left\{|\tilde{p}(\tau)|^2+\tilde{\mathbb{E}}\left[\int_\tau^T|\hat{q}(s)|^2ds|\mathcal{F}_\tau\right]\right\}\\
\leq &  \|\xi\|^2_{\infty} + \frac 14 \|\tilde{p}\|_{\infty} + 4\cdot\text{ess} \sup_{[[0,T]]} \left(\tilde{\mathbb{E}}[\int_{\tau}^T |\tilde{f}(s, \bar{q}(s))-\bar{f}(s, 0)| ds| \mathcal{F}_{\tau}]\right)^2.
\end{align*}	
 The essential supremum is taken over  $\tau\in [[0, T]]$, the collection of stopping times  taking values in $[0, T]$. We obtain that
\begin{align*}
&\frac{1}{4} \|\tilde{p}\|^2_{\infty}+ \frac{1}{2}\|\hat{q}\cdot \tilde{W}\|_{BMO_T}^2\\
\leq& \|\xi\|^2_{\infty}+ 4\cdot  \text{ess}\sup_{[[0,T]]}\left\{\tilde{\mathbb{E}}\left[\int_\tau^T|\tilde{f}(s,\bar{q}(s))-\tilde{f}(s,0)|ds|\mathcal{F}_\tau\right]\right\}^2\\
\leq & \|\xi\|^2_{\infty}+4\cdot  \text{ess}\sup_{[[0,T]]}\left\{\tilde{\mathbb{E}}\left[\int_\tau^T\left( \frac{1}{2} |I_{n\times n}+\tilde\gamma\tilde{\sigma}(s)|\cdot|\bar{q}(s)|^2\right)ds|\mathcal{F}_\tau\right]\right\}^2\\
\leq & \|\xi\|^2_{\infty} + \|I_{n\times n}+\tilde\gamma\tilde{\sigma}\|^2_{\infty}\cdot \|\bar{q}\cdot \tilde{W}\|_{BMO_T}^4 \ \quad (\text{by definition $\tilde{\sigma}$ is uniformly bounded})\ .
\end{align*}	
Recall the definition 
$$\beta^2= 4\cdot \|I_{n\times n}+\tilde\gamma\tilde{\sigma}\|^2_{\infty}\ ,$$
therefore,	we can obtain that
\begin{align*}
& \|\tilde{p}\|^2_{\infty}+  \|\hat{q}\cdot \tilde{W}\|_{BMO_T}^2\leq \|\tilde{p}\|_{\infty} + 2 \|\hat{q}\cdot W\|_{BMO}^2\\
 &\leq 4\|\xi\|^2_{\infty}+ \beta^2\cdot \|\bar{q}\cdot \tilde{W}\|_{BMO_T}^4 \leq 4\|\xi\|^2_{\infty}+  \beta^2(\|\bar{p}\|_{\infty}^2+\|\bar{q}\cdot \tilde{W}\|_{BMO_T}^2)^2\ .
\end{align*}
We can pick $R$ such that
\begin{align*}
4\|\xi\|^2_{\infty}+\beta^2R^4\leq R^2\ ,
\end{align*}
if and only if $\|\xi\|_{\infty}\leq \frac{1}{4\beta}$.
For instance, 
\be\label{suff-R}
R=\frac{1}{\sqrt{2}\beta}
\en 
satisfies this quadratic inequality. Therefore the ball
\begin{align*}
\mathcal{B}_R=\left\{(\tilde{p}, \hat{q}\cdot \tilde{W})\in S^{\infty}\times BMO_T,   \  \|\tilde{p}\|^2_{\infty}+  \|\hat{q}\cdot \tilde{W}\|_{BMO_T}^2 \leq R^2  \right\}\ ,
\end{align*}
is such that $F(\mathcal{B}_R)\subset\mathcal{B}_R$.
\ \\

Similarly for $(\bar{p}^j,\bar{q}^j\cdot \tilde{B})\in\mathcal{B}_R, j=1,2$, using the notation $\delta \bar{p}=\bar{p}^1-\bar{p}^2$, $\delta \bar{q}=\bar{q}^1-\bar{q}^2$. Using Ito's lemma, we can obtain that	
	\begin{align*}
	&\|\delta \tilde{p}\|_{\infty}^2+\|\delta \hat{q}\cdot\tilde{B}\|_{BMO_T}^2\\
	\leq& 4\cdot  \text{ess}\sup_{[[0,T]]}\left\{\tilde{\mathbb{E}}\left[\int_\tau^T|\tilde{f}(s,\bar{q}^1(s))-\tilde{f}(s,\bar{q}^2(s))|ds|\mathcal{F}_\tau\right]\right\}^2\\
	=&  \text{ess}\sup_{[[0,T]]}\left\{\tilde{\mathbb{E}}\left[\int_\tau^T|\bar{q}^1(s)^*(I_{n\times n}+\tilde\gamma\tilde{\sigma}(s))\bar{q}^1(s)-\bar{q}^2(s)^*(I_{n\times n}+\tilde\gamma\tilde{\sigma}(s))\bar{q}^2(s)|ds|\mathcal{F}_\tau\right]\right\}^2\\
	= &\text{ess}\sup_{[[0,T]]}\left\{\tilde{\mathbb{E}}\left[\int_\tau^T|\delta\bar{q}(s)^*(I_{n\times n}+\tilde\gamma\tilde{\sigma}(s))(\bar{q}^1(s)+\bar{q}^2(s))|ds|\mathcal{F}_\tau\right]\right\}^2\\
	\leq & \|I_{n\times n}+\tilde\gamma\tilde{\sigma} \|^2_{\infty}\cdot  \text{ess}\sup_{[[0,T]]}\tilde{\mathbb{E}}\left[\int_\tau^T|\delta\bar{q}(s)^*|^2ds|\mathcal{F}_\tau\right] \cdot  \text{ess}\sup_{[[0,T]]}\tilde{\mathbb{E}}\left[\int_\tau^T(|\bar{q}^1(s)|+|\bar{q}^2(s)|)^2ds|\mathcal{F}_\tau\right]\\
		\leq & 2\|I_{n\times n}+\tilde\gamma\tilde{\sigma} \|^2_{\infty}\cdot  \text{ess}\sup_{[[0,T]]}\tilde{\mathbb{E}}\left[\int_\tau^T|\delta\bar{q}(s)^*|^2ds|\mathcal{F}_\tau\right]\\
		&\qquad  \cdot \left( \text{ess}\sup_{[[0,T]]}\tilde{\mathbb{E}}\left[\int_\tau^T|\bar{q}^1(s)|^2ds|\mathcal{F}_\tau\right]+\text{ess}\sup_{[[0,T]]}\tilde{\mathbb{E}}\left[\int_\tau^T|\bar{q}^2(s)|^2ds|\mathcal{F}_\tau\right]\right)\\
	\leq & \frac{1}{2}\beta^2 \cdot \|\delta\bar{q}\cdot \tilde{W}\|^2_{BMO_T} \cdot 2R^2= \beta^2R^2 \|\delta\bar{q}\cdot \tilde{W}\|^2_{BMO_T}\\
	\leq &\frac{1}{2}\|\delta\bar{q}\cdot \tilde{W}\|^2_{BMO_T} \leq \frac{1}{2}\left(\|\delta \bar{p}\|_{\infty}^2+\|\delta\bar{q}\cdot \tilde{W}\|^2_{BMO_T}\right)\ ,
	\end{align*}
	thus $F(\cdot,\cdot )$ forms a contracting mapping on $\mathcal{B}_R$. Here we use the definition of $R$ given earlier (\ref{suff-R}).Thus, by a standard result,  $F(\cdot, \cdot)$ has a unique fixed point which is the unique solution of (\ref{map:F}) .  
\end{proof}

\subsection{Global Solution}
We next show that \eqref{QBSDE_xi} has a unique solution for any bounded $\|\xi\|_{\infty}<\infty$. For any $\|\xi\|_{\infty}<\infty$, it can be represented as sum $$\xi=\sum_{i=1}^K\xi^{(j)}$$ with $\|\xi^{(j)}\|_{\infty}<\frac{1}{4\beta}$ and $K\in\mathbb{N}$ is a finite integer. Consider the following iterated system, where $j=1,2...,K$:
	\begin{align}\label{equation:iterate}
	\left\{ \begin{array}{ll}
	&d\tilde{p}^{(j)}(t)=-\left(\tilde{f}(t,\hat{q}^{(1)}(t)+...+\hat{q}^{(j)}(t))-\tilde{f}(t,\hat{q}^{(1)}(t)+...+\hat{q}^{(j-1)}(t))\right)dt +\hat{q}^{(j)}(t)^*d\tilde{W}(t)\ ,\\[2mm]
	& \tilde{p}^{(j)}(T)=\xi^{(j)}\ ,\\ [2mm]
	& \tilde{p}^{(0)}=0\ , \ \hat{q}^{(0)}=0\ .
	\end{array} \right.   
	\end{align}	
From Proposistion \ref{prop:local} we have known the existence of $(\tilde{p}^{(1)},\hat{q}^{(1)})$ provided $\|\xi^{(1)}\|_{\infty}\leq \frac{1}{4\beta}$.   Next we show that if $\|\xi^{(2)} \|_{\infty}\leq \frac{1}{4\beta}$,  the solution $(\tilde{p}^{(2)},\hat{q}^{(2)})$ for the following QBSDE exists:
	\begin{align}\label{equation:iterate_2}
	\left\{ \begin{array}{ll}
	&d\tilde{p}^{(2)}(t)=-\left(\tilde{f}(t,\hat{q}^{(1)}(t)+\hat{q}^{(2)}(t))-\tilde{f}(t,\hat{q}^{(1)}(t))\right)dt +\hat{q}^{(2)}(t)^*d\tilde{W}(t),\;0\leq t\leq T,\\[2mm]
	& \tilde{p}^{(2)}(T)=\xi^{(2)}.
	\end{array} \right.   
	\end{align}

\begin{prop}
Suppose $\|\xi^{(2)}\|_{\infty}<\frac{1}{4\beta}$, there exists a unique solution $(\tilde{p}^{(2)},\hat{q}^{(2)})$ for the system \eqref{equation:iterate_2}.
\end{prop}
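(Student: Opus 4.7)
The plan is to reduce the shifted equation \eqref{equation:iterate_2} to the unshifted quadratic BSDE already handled in Proposition \ref{prop:local} by means of a Girsanov change of measure that absorbs the cross term coming from $\hat{q}^{(1)}$. Once this is done, the small-terminal-data fixed-point argument of Proposition \ref{prop:local} applies verbatim under the new measure.

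First I would compute the generator explicitly. Since $\tilde f(t,q)=\gamma r(t)+\tfrac12\tilde\gamma|\theta(t)|^2+\tfrac12 q^{*}(I_{n\times n}+\tilde\gamma\tilde\sigma(t))q$, a direct expansion gives
\[
\tilde f(t,\hat q^{(1)}+\hat q^{(2)})-\tilde f(t,\hat q^{(1)})
= \tfrac12 (\hat q^{(2)})^{*}\bigl(I_{n\times n}+\tilde\gamma\tilde\sigma(t)\bigr)\hat q^{(2)}
+ (\hat q^{(1)})^{*}\bigl(I_{n\times n}+\tilde\gamma\tilde\sigma(t)\bigr)\hat q^{(2)}.
\]
The linear-in-$\hat q^{(2)}$ term is exactly the drift that can be removed by Girsanov. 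From Proposition \ref{prop:local} we already know that $\hat q^{(1)}\cdot\tilde W\in BMO_T(\tilde{\mathbb P})$, and $\tilde\sigma$ is uniformly bounded, so $\int_0^{\cdot}(I_{n\times n}+\tilde\gamma\tilde\sigma(s))\hat q^{(1)}(s)\cdot d\tilde W(s)$ is a $BMO_T(\tilde{\mathbb P})$-martingale. By Kazamaki's criterion its Dol\'eans-Dade exponential is a true $\tilde{\mathbb P}$-martingale and defines an equivalent probability measure $\hat{\mathbb P}$ via
\[
\frac{d\hat{\mathbb P}}{d\tilde{\mathbb P}}=\mathcal E_T\!\Bigl(\int_0^{\cdot}\bigl(I_{n\times n}+\tilde\gamma\tilde\sigma(s)\bigr)\hat q^{(1)}(s)\cdot d\tilde W(s)\Bigr),
\]
under which $\hat W(t):=\tilde W(t)-\int_0^t(I_{n\times n}+\tilde\gamma\tilde\sigma(s))\hat q^{(1)}(s)\,ds$ is a Brownian motion.

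Next, rewriting \eqref{equation:iterate_2} in terms of $\hat W$, the cross term is absorbed into the Brownian differential and the equation reduces to the purely quadratic BSDE
\[
d\tilde p^{(2)}(t)=-\tfrac12(\hat q^{(2)}(t))^{*}\bigl(I_{n\times n}+\tilde\gamma\tilde\sigma(t)\bigr)\hat q^{(2)}(t)\,dt+(\hat q^{(2)}(t))^{*}d\hat W(t),\qquad \tilde p^{(2)}(T)=\xi^{(2)}.
\]
This is structurally identical to \eqref{QBSDE_xi}, with $\hat W$ replacing $\tilde W$ and $\xi^{(2)}$ in place of $\xi$. Since $\|\xi^{(2)}\|_\infty<1/(4\beta)$ and the constant $\beta^2=4\|I_{n\times n}+\tilde\gamma\tilde\sigma\|_\infty^2$ is unchanged, the contraction-mapping argument of Proposition \ref{prop:local} applied under $\hat{\mathbb P}$ produces a unique pair $(\tilde p^{(2)},\hat q^{(2)})\in S_\infty\times L^2$ with $\hat q^{(2)}\cdot\hat W\in BMO_T(\hat{\mathbb P})$ and
\[
\|\tilde p^{(2)}\|_{\infty}^{2}+\|\hat q^{(2)}\cdot\hat W\|_{BMO_T(\hat{\mathbb P})}^{2}\leq \tfrac{1}{2\beta^2}.
\]

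Finally, I would invoke the Kazamaki--Doleans stability of the $BMO$ class under equivalent change of measure (the Girsanov shift here is itself $BMO$, hence $BMO_T(\hat{\mathbb P})$ and $BMO_T(\tilde{\mathbb P})$ are equivalent norms on stochastic integrals) to transfer the $BMO$ bound back to the original measure $\tilde{\mathbb P}$. The main (though standard) subtlety is precisely this control of the $BMO$ norm under the measure change: one needs the Girsanov density to be in the right reverse H\"older class, which is guaranteed by the $BMO$ bound on $\hat q^{(1)}$ coming from Proposition \ref{prop:local}. Uniqueness follows from uniqueness of the Proposition \ref{prop:local} step under $\hat{\mathbb P}$, since any other solution under $\tilde{\mathbb P}$ would produce, via the same Girsanov transform, a second solution of the reduced equation.
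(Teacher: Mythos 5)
Your proof follows essentially the same route as the paper's: expand the generator to isolate the cross term, absorb it via a Girsanov change of measure driven by $(I_{n\times n}+\tilde\gamma\tilde{\sigma})\hat{q}^{(1)}$ (justified by the $BMO$ property of $\hat{q}^{(1)}\cdot\tilde{W}$), and then reapply the small-terminal-data contraction argument of Proposition \ref{prop:local} under the new measure. Your added remarks on Kazamaki's criterion and the equivalence of $BMO$ norms under the measure change make explicit what the paper delegates to a cited property of $BMO$ martingales, but the argument is the same.
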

\begin{proof}
Note that  
\begin{align*}
&\tilde{f}(t,\hat{q}^{(1)}(t)+\hat{q}^{(2)}(t))-\tilde{f}(t,\hat{q}^{(1)}(t))\\
=&\frac{1}{2}(\hat{q}^{(2)}(t)+\hat{q}^{(1)}(t))^*\left(I_{n\times n}+\tilde\gamma\tilde{\sigma}(t)\right)(\hat{q}^{(2)}(t)+\hat{q}^{(1)}(t))-\frac{1}{2}\hat{q}^{(1)}(t)^*\left(I_{n\times n}+\tilde\gamma\tilde{\sigma}(t)\right)\hat{q}^{(1)}(t)\\
=&\frac{1}{2}\hat{q}^{(2)}(t)^*\left(I_{n\times n}+\tilde\gamma\tilde{\sigma}(t)\right)\hat{q}^{(2)}(t)+\hat{q}^{(1)}(t)^*\left(I_{n\times n}+\tilde\gamma\tilde{\sigma}(t)\right)\hat{q}^{(2)}(t)\ .
\end{align*}
Then \eqref{equation:iterate_2} becomes
 	\begin{align}\label{equation:iterate_3}
 	\left\{ \begin{array}{ll}
 	&d\tilde{p}^{(2)}(t)=-\frac{1}{2}\hat{q}^{(2)}(t)^*\left(I_{n\times n}+\tilde\gamma\tilde{\sigma}(t)\right)\hat{q}^{(2)}(t)dt +\hat{q}^{(2)}(t)^*\left\{d\tilde{W}(t)-\left(I_{n\times n}+\tilde\gamma\tilde{\sigma}(t)\right)\hat{q}^{(1)}(t)dt\right\}\ ,\\[2mm]
 	& \tilde{p}^{(2)}(T)=\xi^{(2)}
 	\end{array} \right.   
 	\end{align}
 Using the Girsanov's theorem and Property 3 of BMO in Section 11.1.2 of \cite{touzi2012optimal}, we can define the probability $\mathbb{P}^{(2)}$ by
 $$d\mathbb{P}^{(2)}:=\mathcal{E}_T(\left(I_{n\times n}+\tilde\gamma\tilde{\sigma}\right)\hat{q}^{(1)}\cdot \tilde{W})d\tilde{\mathbb{P}}\ ,$$
 where ${\cal E}_T(\cdot)$ is defined in (\ref{cal-E}),   and the Brownian motion
 $$d\tilde{W}^{(2)}(t):=d\tilde{W}(t)-\left(I_{n\times n}+\tilde\gamma\tilde{\sigma}(t)\right)\hat{q}^{(1)}(t)dt\ $$
so that \eqref{equation:iterate_3} becomes
 	\begin{align}\label{equation:iterate_4}
 	\left\{ \begin{array}{ll}
 	&d\tilde{p}^{(2)}(t)=-\frac{1}{2}\hat{q}^{(2)}(t)^*\left(I_{n\times n}+\tilde\gamma\tilde{\sigma}(t)\right)\hat{q}^{(2)}(t)dt +\hat{q}^{(2)}(t)^*d\tilde{W}^{(2)}(t)\ ,\\[2mm]
 	& \tilde{p}^{(2)}(T)=\xi^{(2)}
 	\end{array} \right.   
 	\end{align}
which has the same form as \eqref{QBSDE_xi}. Using the same argument as that in Proposition \ref{prop:local}, we can show that under the condition $\|\xi^{(2)}\|_{\infty}\leq \frac{1}{4\beta}$, the system  \eqref{equation:iterate_4} has a unique solution $(\tilde{p}^{(2)},\hat{q}^{(2)})$ in the space $S_{\infty}([0,T]\times \Omega, \mathbb{R}^d)\times BMO_T(\mathbb{P}^{(2)})$.

\end{proof}
 Similarly, we can show that the solutions $(\tilde{p}^{(j)},\hat{q}^{(j)})$, $j=1,2,3,..,K$ exist for the system \eqref{equation:iterate}. Finally, we can see that 
 $$
\tilde{p}(t)=\tilde{p}^{(1)}(t)+ \tilde{p}^{(2)}(t)+\cdots+ \tilde{p}^{(K)}(t),
$$
 and 
$$
\hat{q}(t)=\tilde{q}^{(1)}(t)+ \tilde{q}^{(2)}(t)+\cdots+ \tilde{q}^{(K)}(t)
$$
solve \eqref{QBSDE_xi}.

\subsection{Global Uniqueness}
We consider the QBSDE \eqref{QBSDE}. We first find that
\begin{align*}
|f(t,\hat{q}(t))|\leq & \left|\gamma r(t)+\frac{1}{2}\tilde\gamma\left|\theta(t)\right|^2\right|+\frac{1}{2}\|\hat{q}(t)\|^2\cdot \left\|I_{n\times n}+\tilde\gamma\tilde{\sigma}\right\|+\|\tilde\gamma\theta(t)\|\cdot \|\hat{q}(t)\|\\
\leq &\left|\gamma r(t)+\frac{1}{2}\tilde\gamma\left|\theta(t)\right|^2\right|+\frac{1}{2}\|\tilde\gamma\theta(t)\|^2+\frac{1}{2}\|\hat{q}(t)\|^2\cdot (\left\|I_{n\times n}+ \tilde{\gamma} \tilde{\sigma}\right\|+1)\\
\leq &K+ M\cdot \|\hat{q}(t)\|^2
\end{align*}
where
\begin{align*}
&K:=\left\|\gamma r(\cdot)+\frac{1}{2}\tilde\gamma\left|\theta(\cdot)\right|^2\right\|_{\infty}+\frac{1}{2}\|\tilde\gamma\theta(\cdot)\|^2_{\infty}\ ,\\
&M:=\frac{1}{2}(\left\|I_{n\times n}+\tilde\gamma\tilde{\sigma}\right\|_{\infty}+1)\ .
\end{align*}

\begin{lemma}
Let $\xi\in L^{\infty}(\Omega, \mathbb{R})$. Assume that $(\hat p,\hat q)$ is a solution such that $\hat p$ is bounded. That is, $\hat p\in S_{\infty}([0,T]\times\Omega;\mathbb{R}^d)$. Then $\hat q\cdot W$ is in $BMO_T(\mathbb{P})$.
\end{lemma}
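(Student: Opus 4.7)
The strategy is to tame the quadratic nonlinearity $|\hat q|^2$ in the driver of \eqref{QBSDE} through an exponential transformation of the bounded process $\hat p$. I will apply It\^o's formula to $g(\hat p(t)):=e^{\alpha \hat p(t)}$ for a constant $\alpha>0$ to be chosen. Using the dynamics of $\hat p$,
\[
dg(\hat p(t)) = g(\hat p(t))\left[\left(-\alpha f(t,\hat q(t))+\tfrac{\alpha^2}{2}|\hat q(t)|^2\right)dt+\alpha\,\hat q(t)^* dW(t)\right].
\]
The quadratic growth estimate $|f(t,q)|\leq K+M|q|^2$ established immediately above the lemma bounds the drift below by $g(\hat p)\bigl[-\alpha K+c|\hat q|^2\bigr]$, where $c:=\tfrac{\alpha^2}{2}-\alpha M$. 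The key point is to pick $\alpha>2M$, which forces $c>0$, so that the diffusion-induced coercive term dominates the quadratic portion of the driver.

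Next, I would fix an arbitrary stopping time $\tau\in[[0,T]]$, integrate the preceding inequality from $\tau$ to $T$, and take conditional expectation $\mathbb{E}[\cdot\mid\mathcal{F}_\tau]$. The stochastic integral $\int_\tau^T\alpha g(\hat p(s))\,\hat q(s)^* dW(s)$ is a true $\mathbb{P}$-martingale because $g(\hat p)$ is uniformly bounded by $e^{\alpha\|\hat p\|_{T,\infty}}$ (since $\hat p\in S_\infty$) and $\hat q\in L^2([0,T]\times\Omega)$ as part of being a BSDE solution. Rearranging yields
\[
c\,\mathbb{E}\!\left[\int_\tau^T g(\hat p(s))|\hat q(s)|^2\,ds\,\Big|\,\mathcal{F}_\tau\right]\leq \mathbb{E}[g(\hat p(T))\mid\mathcal{F}_\tau]-g(\hat p(\tau))+\alpha K\,\mathbb{E}\!\left[\int_\tau^T g(\hat p(s))\,ds\,\Big|\,\mathcal{F}_\tau\right].
\]

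Finally, sandwiching $e^{-\alpha\|\hat p\|_{T,\infty}}\leq g(\hat p)\leq e^{\alpha\|\hat p\|_{T,\infty}}$ and using both $\|\hat p\|_{T,\infty}<\infty$ and $\|\xi\|_\infty<\infty$ delivers a uniform estimate
\[
\mathbb{E}\!\left[\int_\tau^T|\hat q(s)|^2\,ds\,\Big|\,\mathcal{F}_\tau\right]\leq \tfrac{1}{c}\,e^{2\alpha\|\hat p\|_{T,\infty}}\bigl(1+\alpha KT\bigr),
\]
with a right-hand side independent of $\tau$. Taking the essential supremum over $\tau\in[[0,T]]$ then gives $\|\hat q\cdot W\|_{BMO_T(\mathbb{P})}^2<\infty$, which is exactly the claim.

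The main technical point will be ensuring that the stochastic integral is a genuine martingale rather than only a local martingale when passing to the conditional expectation. If the notion of ``solution'' only provides locally square-integrable $\hat q$, I would localize by a sequence of stopping times $\{\tau_n\}$ (on which the integral is trivially a martingale since $g(\hat p)$ is bounded), derive the preceding estimate at each level $n$, and pass to the limit $n\to\infty$ via Fatou's lemma, exploiting that the bound is already uniform in both $\tau$ and the localization parameter.
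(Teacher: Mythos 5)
Your argument is correct and is essentially the paper's own proof: both apply It\^o's formula to an exponential $e^{\alpha\hat p}$ of the bounded solution, use the quadratic growth bound $|f(t,q)|\le K+M|q|^2$, and choose the exponent larger than $2M$ so that the It\^o correction term $\tfrac{\alpha^2}{2}|\hat q|^2$ dominates the quadratic part of the driver, yielding a conditional bound on $\int_\tau^T|\hat q|^2\,ds$ uniform over stopping times (the paper fixes $\lambda=4M$ where you keep a generic $\alpha>2M$). Your explicit localization step to justify the martingale property of the stochastic integral is in fact slightly more careful than the paper, which only remarks ``if $\hat q^*\cdot W$ is a square integrable martingale.''
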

\begin{proof}
Let $\hat{p}$ be a solution of \eqref{QBSDE}, and there be a constant  $C>0$ such that
\begin{align*}
|\hat{p}(t)|\leq C\quad \text{a.s. for all}\ t\ .
\end{align*} 	
Denote $\tau$ as a stopping time such that $0\leq\tau\leq T$. Let $\lambda$ be a fixed real number (to be chosen later). Applying the Ito's lemma for $\exp(\lambda \hat{p}(T))-\exp(\lambda\hat{p}(\tau))$ and using the boundary condition $\hat{p}(T)=0$, we have
\ban
&&\frac{\lambda^2}{2}\int_{\tau}^{T}e^{\lambda \hat{p}(s)}|\hat{q}(s)|^2ds-\lambda\int_{\tau}^{T}e^{\lambda \hat{p}(s)} f(s,\hat{q}(s))ds+\lambda\int_{\tau}^{T}e^{\lambda \hat{p}(s)}\hat{q}(s)^*dW_s\\
&&=\exp(\lambda \hat{p}(T))-\exp(\lambda \hat{p}(\tau))\leq 1\ .
\ean	 
 
If $\hat{q}^*\cdot W$ is square integrable martingale, taking conditional expectation in above inequality, we obtain
\begin{align*}
&\frac{\lambda^2}{2}\mathbb{E}\left[\int_{\tau}^{T}e^{\lambda \hat{p} (s)}|\hat{q}(s)|^2ds\bigg| \mathcal{F}_\tau\right]\\
\leq & 1+\lambda\cdot  \mathbb{E}\left[\int_{\tau}^{T}e^{\lambda \hat{p}(s)} f(s,\hat{q}(s))ds\bigg|\mathcal{F}_\tau\right]\\
\leq &1+\lambda K\cdot  \mathbb{E}\left[\int_{\tau}^{T}e^{\lambda \hat{p}(s)}ds\bigg|\mathcal{F}_\tau\right]+\lambda M\cdot  \mathbb{E}\left[\int_{\tau}^{T}e^{\lambda \hat{p}(s)}|\hat{q}(s)|^2ds\bigg|\mathcal{F}_\tau\right]
\end{align*}
which implies that
\begin{align*}
\left(\frac{\lambda^2}{2}-\lambda M\right)\mathbb{E}\left[\int_{\tau}^{T}e^{\lambda \hat{p}(s)}|\hat{q}(s)|^2ds\bigg|\mathcal{F}_\tau\right]\leq 1+\lambda K\cdot  \mathbb{E}\left[\int_{\tau}^{T}e^{\lambda \hat{p}(s)}ds\bigg|\mathcal{F}_\tau\right]\ .
\end{align*}
Taking $$\lambda= 4M\ ,$$
we can have
\begin{align*}
4M^2\cdot \mathbb{E}\left[\int_{\tau}^{T}e^{\lambda \hat{p}(s)}|\hat{q}(s)|^2ds\bigg| \mathcal{F}_\tau\right]\leq 1+\lambda KT\cdot  e^{4MC}\ ,  
\end{align*} 
for any stopping time $\tau$. Since $\hat{p}\geq -C$, then we can obtain that
\begin{align*}
\|\hat{q}\cdot W\|^2_{BMO_T(\mathbb{P})}\leq \frac{ (1+\lambda KT\cdot  e^{4MC})e^{4MC}}{4M^2} \ .
\end{align*}
Finally, we can conclude that the solution to \eqref{QBSDE} belongs to $S_{\infty}([0,T]\times \Omega;\mathbb{R}^d)$ and $\hat{q}\cdot W \in BMO_T(\mathbb{P})$.

 
\end{proof}	

\begin{theorem}\label{theorem-uniqueness}
	There exists unique solution of \eqref{QBSDE}  such that $\hat{p}\in S_{\infty}([0,T]\times \Omega; \mathbb{R}^d)$ and $\hat{q}\cdot W \in BMO_T(\mathbb{P})$.
\end{theorem}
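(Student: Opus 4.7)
The existence half is already in hand: the iterative construction in the Global Solution subsection (combined with the preceding lemma, which upgrades any bounded $\hat p$ to a solution with $\hat q\cdot W\in BMO_T(\mathbb{P})$) produces at least one solution in $S_{\infty}\times BMO_T(\mathbb{P})$. So the plan is to focus on uniqueness. Let $(\hat p^1,\hat q^1)$ and $(\hat p^2,\hat q^2)$ be two solutions of \eqref{QBSDE} in this class, and set $\delta p=\hat p^1-\hat p^2$, $\delta q=\hat q^1-\hat q^2$. Then $\delta p(T)=0$ and
\[
d(\delta p)(t)=-\bigl[f(t,\hat q^1(t))-f(t,\hat q^2(t))\bigr]dt+\delta q(t)^{*}dW(t).
\]
Because $f$ is quadratic in $q$ with Hessian $I_{n\times n}+\tilde\gamma\tilde\sigma(t)$, we may write
\[
f(t,\hat q^1)-f(t,\hat q^2)=\delta q(t)^{*}\Phi(t),\qquad \Phi(t):=\tilde\gamma\theta(t)+\tfrac12\bigl(I+\tilde\gamma\tilde\sigma(t)\bigr)\bigl(\hat q^1(t)+\hat q^2(t)\bigr).
\]

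Next I would verify that $\Phi\cdot W\in BMO_T(\mathbb{P})$: by (A1)--(A3), $\theta$ and $I+\tilde\gamma\tilde\sigma$ are bounded, and each $\hat q^i\cdot W$ is in $BMO_T(\mathbb{P})$ by hypothesis, so $\Phi\cdot W$ is a sum of BMO martingales. Kazamaki's criterion then ensures that $\mathcal{E}(\Phi\cdot W)$ is a uniformly integrable martingale, so the probability $d\mathbb{Q}:=\mathcal{E}_T(\Phi\cdot W)\,d\mathbb{P}$ is well defined, and
\[
W^{\mathbb{Q}}(t):=W(t)-\int_0^t\Phi(s)\,ds
\]
is a $\mathbb{Q}$-Brownian motion. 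Under $\mathbb{Q}$ the difference BSDE becomes
\[
d(\delta p)(t)=\delta q(t)^{*}dW^{\mathbb{Q}}(t),\qquad \delta p(T)=0.
\]
By the BMO invariance under Girsanov (Theorem 3.6 of Kazamaki), $\delta q\cdot W^{\mathbb{Q}}\in BMO_T(\mathbb{Q})$, hence a true $\mathbb{Q}$-martingale. Taking conditional $\mathbb{Q}$-expectations gives $\delta p(t)=\mathbb{E}^{\mathbb{Q}}[\delta p(T)\mid\mathcal{F}_t]=0$ for every $t$, and the martingale representation together with $\delta p\equiv 0$ forces $\delta q\equiv 0$. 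This establishes uniqueness.

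The main obstacle is the Girsanov step: one must be sure that the linearization coefficient $\Phi$ produces a genuine change of measure, not just a formal one. That is why the uniqueness class is precisely $S_{\infty}\times BMO_T(\mathbb{P})$ rather than, say, $S_{\infty}\times L^2$; it is the BMO property of $\hat q^1\cdot W$ and $\hat q^2\cdot W$ (combined with boundedness of $\theta$ and $I+\tilde\gamma\tilde\sigma$) that makes Kazamaki's criterion available and simultaneously guarantees that BMO is preserved under $\mathbb{Q}$ so that the resulting local martingale is a true martingale. Once that technical point is secured, the rest of the argument is a clean linearization-plus-Girsanov computation.
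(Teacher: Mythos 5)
Your argument is correct and follows essentially the same route as the paper: linearize the quadratic driver so that $f(t,\hat q^1)-f(t,\hat q^2)=\delta q^{*}\Phi$ (the paper writes this via a coordinate-wise difference quotient $\nabla f$ rather than your closed-form $\Phi$), observe that the linearization coefficient generates a BMO martingale, and apply Girsanov to make $\delta p$ a martingale vanishing at $T$. Your write-up is in fact slightly more careful than the paper's, since you explicitly invoke Kazamaki's criterion for the change of measure and BMO-invariance to upgrade the local martingale to a true one, steps the paper only cites in passing.
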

\begin{proof}
Suppose that $(\hat{p},\hat{q})$ and $(\hat{p}_1,\hat{q}_1)$ are two solutions to \eqref{QBSDE}. Let
$$\delta p=\tilde{p}-\hat{p}_1$$
and 
$$\delta q=\hat{q}-\hat{q}_1\ .$$
 We have
\begin{align*}
\delta p(t)=\delta p(0)-\int_0^t\left[f(s,\hat{q}(s))-f(s,\hat{q}_1(s))\right]ds+\int_0^t\delta q(s)^*dW(s)\ .
\end{align*}
Let
\begin{align*}
&\partial_jf(s)\equiv\partial_jf(s,\hat{q}(s),\hat{q}_1(s))\\
:=&\frac{f(s,\hat{q}^1(s),...,\hat{q}^{j-1}(s),\hat{q}^j(s),\hat{q}_1^{j+1}(s),...,\hat{q}_1^{n}(s))-f(s,\hat{q}^1(s),...,\hat{q}^{j-1}(s),\hat{q}_1^{j}(s),\hat{q}_1^{j+1}(s),...,\hat{q}_1^{n}(s))}{\hat{q}^{j}(s)-\hat{q}_1^{j}(s)}\ ,
\end{align*}
for $j\in\{1,...n\}$ where $\hat{q}^j$  and $\hat{q}_1^j$ are the $j^{th}$ component of $\hat{p}$ and $\hat{p}_1$. Let 
$$\nabla f(s)=(\partial_1f(s),...,\partial_nf(s))^*\ .$$
 Thus 
\begin{align*}
f(s,\hat{q}(s))-f(s,\hat{q}_1(s))=\nabla f(s)^*\delta {q}(s)\ .
\end{align*}
Then we have
\begin{align*}
d\delta p(s)=- \nabla f(s)^*\delta {q}(s)ds+\delta q(s) dW(s) \ , \ \delta p(T)=0\ .
\end{align*}
From Lemma 4.1,  $$\nabla f\cdot W\in BMO_T(\mathbb{P})$$
then by Property 3 of BMO in Section 11.1.2 of \cite{touzi2012optimal} and Girsanov Theorem, we define a new probability 
$$dQ=\mathcal{E}_T(\nabla f\cdot W)d\mathbb{P}\ ,$$
where ${\cal E}(\cdot)$ is defined in (\ref{cal-E}).
Then we obtain that $\delta p(s)$ is a $Q$-martingale, which yields
$$\delta p(s)=\mathbb{E}_Q[\delta p(T)|\mathcal{F}_s]=0\ , \ \forall s\in [0,T]\ .$$

\end{proof}

 \section{Martingale Method}\label{Martingale-App}
In the case of complete markets, we consider the portfolio optimization problem written as 
\[
\sup_{\pi}\EE[U(X^{\pi}(T))]
\]
where $U(\cdot)$ is denoted as the utility function. Namely, $U(x)$ is increasing, concave, and $ U'(\infty)=0$. The above problem can be rewritten as 
\[
\sup_{\pi}\EE[U(X^{\pi}(T))]=\sup_{\xi\in{\mathcal{C }}(x)}\EE[U(\xi)]
\]
with 
\[
H_0(t)=\exp\left(-\int_0^t  \theta(s)^*dW(s)-\frac{1}{2}\int_0^t\left| \theta(s)\right|^2ds-\int_0^tr(s)ds\right)
\]
with the notation
\[
 \theta(t)=\sigma(t)^{-1}\left(\mu(t)-r(t){\bf 1}\right).
\]
 where ${\mathcal{C}}(x)=\left\{\xi\in{\mathcal{F}}(T):\EE\left[H_0(T)\xi\right]\leq x\right\}$ given by the budget constraint. Hence, we have 
 \be\label{martingale-ineq}
 \sup_{\xi\in{{\mathcal{C }}(x)}}\bigg\{\EE[U(\xi)]+z(x-\EE\left[H_0(T)\xi\right])\bigg\}\\
 = xz+\sup_{\xi\in{{\mathcal{C }}(x)}}\EE[U(\xi)-zH_0(T)\xi].
 \en
The equation \eqref{martingale-ineq} attains its maximum at 
\[
\hat\xi=I(zH_0(T))
\]
where $I(\cdot)$ is denoted as the inverse of $ U'(\cdot)$. We further assume the following condition 
\[
\Lambda(z)=\EE[H_0(T)I(zH_0(T))]<\infty 
\]
for all $z>0$. Using $\Lambda(z)$ being strictly decreasing in $z$ with $\Lambda(0)=\infty$ and $\Lambda(\infty)=0$, there exists a unique $z={\mathcal{Z}}(x)$ such that $\Lambda({\mathcal{Z}}(x))=x$. We now define $$M(t)=\EE[H_0(T)I({\mathcal{Z}}(x)H_0(T))|{\mathcal{F}}_t]$$ where ${\mathcal{F}}_t$ is a filtration with probability measure $\PP$ and $\psi(s)$ by using the martingale representation 
\be\label{martingale-1}
M(t)=x+\int_0^t\psi(s)^*dW(s).
\en 
In addition, applying It\^o formula to $ H_0(t)X^{\hat\pi}(t)$ gives 
\be\label{martingale-2}
H_0(t) X^{\hat{\pi}}(t)=x+ \int_0^t H_0(s) X^{\hat{\pi}}(s)(\hat{\pi}(s)^* \sigma(s) -\theta(s)^*) dW(s),
\en
and $\hat{\pi}(t)$ is chosen such that $M(t)=H_0(t)X^{\hat{\pi}}(t)$.
By identifying \eqref{martingale-1} and \eqref{martingale-2}, we obtain 
\[
\hat\pi (t)=(\sigma(t)^{*})^{-1}\left(  \theta(t)+  \psi(t)M(t)^{-1}\right).
\]


\end{document}